\documentclass{aims}
\usepackage{amsmath}
  \usepackage{paralist}
  \usepackage{graphics} 
  \usepackage{epsfig} 
\usepackage{graphicx}  \usepackage{epstopdf}
 \usepackage[colorlinks=true]{hyperref}
\hypersetup{urlcolor=blue, citecolor=red}

  \textheight=8.2 true in
   \textwidth=5.0 true in
    \topmargin 30pt
     \setcounter{page}{1}



\newtheorem{theorem}{Theorem}[section]
\newtheorem{corollary}{Corollary}

\newtheorem{lemma}[theorem]{Lemma}
\newtheorem{proposition}{Proposition}

\theoremstyle{definition}

\newtheorem{example}[theorem]{Example}

\title[Cyclic codes over $\mathbb{Z}_4+u\mathbb{Z}_4$] 
      {Cyclic codes over $\mathbb{Z}_4+u\mathbb{Z}_4$}

\author[Rama Krishna Bandi and Maheshanand Bhaintwal]{}

\subjclass{Primary: 94B05, 94B15.}
 \keywords{Cyclic codes, codes over rings, free codes.}

 \email{bandi.ramakrishna@gmail.com}
 \email{mahesfma@iitr.ac.in}


\begin{document}
\maketitle

\centerline{\scshape Rama Krishna Bandi }
\medskip
{\footnotesize
 \centerline{ Department of Mathematics}
  \centerline{Indian Institute of Technology Roorkee}
   \centerline{Roorkee-247667, INDIA}
} 

\medskip

\centerline{\scshape Maheshanand Bhaintwal}
\medskip
{\footnotesize
 \centerline{ Department of Mathematics}
  \centerline{Indian Institute of Technology Roorkee}
   \centerline{Roorkee-247667, INDIA}
}

\bigskip

\date{Received: XXX/ Accepted: XXX}
\maketitle

 \centerline{(Communicated by the associate editor name)}

\begin{abstract}
In this paper, we have studied cyclic codes over the ring $R=\mathbb{Z}_4+u\mathbb{Z}_4$, $u^2=0$. We have considered cyclic codes of odd lengths.  A sufficient condition for a cyclic code over $R$ to be a $\mathbb{Z}_4$-free module is presented. We have provided the general form of the generators of a cyclic code over $R$ and determined a formula for the ranks of such codes. In this paper we have mainly focused on  principally generated cyclic codes of odd length over $R$. We have determined a necessary condition and a sufficient condition for cyclic codes of odd lengths over $R$ to be $R$-free.
\end{abstract}

\section{Introduction}
Cyclic codes are amongst the most studied algebraic codes. Their structure is well known over finite fields \cite{macwilliams}. Recently codes over rings have generated a lot of interest after a breakthrough paper by Hammons et al. \cite{hammons} showed that some well known binary non-linear codes are actually images of some linear codes over $\mathbb{Z}_4$ under the Gray map. Since then, cyclic codes have also been extensively studied over various finite rings. Their structure over finite chain rings is now well known \cite{norton}. They have also been studied over other rings such as $\mathbb{F}_2+u\mathbb{F}_2$, $u^2=0$, \cite{bonnecaze}; $\mathbb{F}_2+u\mathbb{F}_2+v\mathbb{F}_2+uv\mathbb{F}_2$, $u^2=v^2=0, uv=vu$, \cite{yildiz1}; and $\mathbb{F}_2+v\mathbb{F}_2$, $v^2=v$, \cite{zhu}.

Bonnecaze and Udaya \cite{bonnecaze} have studied cyclic codes over the ring $\mathbb{F}_2+u\mathbb{F}_2$, $u^2=0$, and provided their basic framework. The ring $\mathbb{F}_2+u\mathbb{F}_2$ is useful because it shares many properties of $\mathbb{Z}_4$, and since it has characteristic $2$, it also shares properties of the field $\mathbb{F}_4$. In most of these studies length of the cyclic code is relatively prime to the characteristic of the ring. A complete structure of cyclic codes over $\mathbb{Z}_4$ of odd length has been given in \cite{pless} and \cite{kumar}.

 In this paper, we have studied cyclic codes over the ring $R=\mathbb{Z}_4+u\mathbb{Z}_4$, $u^2=0$.  We have considered cyclic codes of odd lengths. Recently, Yildiz and Karadeniz \cite{yildiz} have studied linear codes over $R$.  A linear code $C$ over $R$ can be expressed as $C = C_1 + uC_2$, where $C_1, C_2$ are linear codes over $\mathbb{Z}_4$. As usual, a cyclic code of length $n$ over $R$ is an ideal of $R_n=\frac{R[x]}{\left\langle x^n-1\right\rangle}$. We have shown that a linear code $C = C_1 + uC_2$ of length $n$ over $R$ is a cyclic code if and only if $C_1, C_2$ are cyclic codes of length $n$ over $\mathbb{Z}_4$. We have determined a sufficient condition for a cyclic code of odd length over $R$ to be a $\mathbb{Z}_4$-free module. We have provided the general form of the generators of a cyclic code over $R$, from which we have determined a formula for the ranks of such codes. The ring $R_n$ is in general not a principal ideal ring, and so a cyclic code over $R$ is in general not principally generated. In this paper we have mainly focused on cyclic codes of odd length over $R$ which are principally generated. We have determined a necessary condition and a sufficient condition for principally generated cyclic codes of odd lengths over $R$ to be $R$-free.

The paper is organized as follows: In Section II, we present the preliminaries. 
In Section III, we have discussed the Galois extensions of $R$ and the ideal structure of these extensions. In Section IV, we have studied cyclic codes of odd length over $R$. The forms of the ranks and minimal spanning sets of these codes are presented. In Section V, we have mainly focused to principally generated cyclic codes of odd length over $R$ and determined a necessary condition and a sufficient condition for cyclic codes over $R$ to be $R$-free. We have also expressed principally generated cyclic codes in terms of the $n^{th}$ roots of unity.

\section{Preliminaries}
Throughout the paper, $R$ denotes the ring $\mathbb{Z}_4+u\mathbb{Z}_4 = \{a + ub ~|~ a, b \in \mathbb{Z}_4\}$ with $u^2=0$. $R$ can be viewed as the quotient ring $\mathbb{Z}_4[u]/\left\langle u^2 \right\rangle$. The units of $R$ are
\[1, 3, 1+u, 1+2u, 1+3u, 3+u, 3+2u, 3+3u~,\]
and the non-units are
\[0,2, u, 2u, 2+u, 2+2u, 3u, 2+3u~.\]
$R$ has six ideals in all: $\{0\}, \left\langle u\right\rangle = \{0, u, 2u, 3u\}, \left\langle 2\right\rangle = \{0, 2, 2u, 2+2u\}, \left\langle 2u\right\rangle = \{0, 2u\}, \left\langle 2+u\right\rangle = \{0, 2+u, 2u, 2+3u\}$ and $\left\langle 2, u\right\rangle = \{0, 2, 2u, 3u, 2+u, 2+2u, 2+3u\}$.

$R$ is a local ring of characteristic 4 with $\left\langle 2, u\right\rangle$ as its unique maximal ideal. A commutative ring $\mathcal{R}$ is called a \emph{chain ring} if its ideals form a chain under the relation of inclusion. From the ideals of $R$, we can see that they do not form a chain; for instance, the ideals $\left\langle u\right\rangle$ and $\left\langle 2\right\rangle$ are not comparable. Therefore, $R$ is a non-chain extension of $\mathbb{Z}_4$.  Also $R$ is not a principal ideal ring; for example, the ideal $\left\langle 2, u\right\rangle$ is not generated by any single element of $R$.

We denote the residue field $\frac{R}{\left\langle 2, u\right\rangle}$ of $R$ by $\overline{R}$. Since $\{0+\left\langle 2, u\right\rangle\} \cup \{1+\left\langle 2, u\right\rangle\} = R$, therefore $\overline{R} \cong \mathbb{F}_2$. The image of any element $a\in R$ under the projection map $\mu: R \rightarrow \overline{R}$ is denoted by $\overline{a}$. The map $\mu$ is extended to $R[x] \rightarrow \overline{R}[x]$ in the usual way. The image of an element $f(x) \in R[x]$ in $\overline{R}[x]$ under this projection is denoted by $\overline{f}(x)$. A polynomial $f(x) \in R[x]$ is called \emph{basic irreducible (primitive)} if $\overline{f(x)}$ is an irreducible (primitive) polynomial in $\overline{R}[x]$.  Basic irreducible polynomials over finite local rings play approximately the same role as irreducible polynomials play over finite fields.

A polynomial $f(x)$ over $R$ is called a \textit{regular polynomial} if it is not a zero divisor in $R[x]$, equivalently, $f(x)$ is regular if $\overline{f(x)} \neq 0$.  Two polynomials $f(x), g(x) \in R[x]$ are said to be \emph{coprime} if there exist $a(x), b(x) \in R[x]$ such that
\[a(x)f(x) + b(x)g(x) = 1~.\]

Now we recall the Hensel's Lemma and factorization of polynomials in $\mathbb{Z}_4[x]$. A polynomial $f(x)$ in $\mathbb{Z}_4[x]$ is said to be \emph{primary} if the principal ideal  $\left\langle f \right\rangle$ is primary, i.e., whenever $ab \in \left\langle f \right\rangle$, then either $a \in \left\langle f \right\rangle$ or $b^j \in \left\langle f \right\rangle$ for positive integer $j$.

\begin{theorem}[Hensel's Lemma \cite{pless}]\label{hensels} Let $f$ be a monic polynomial in $\mathbb{Z}_4[x]$ and assume that $f~~(\mbox{mod}~2)= g_1 g_2\cdots g_r$, where $g_1,g_2,\ldots,g_r$ are pairwise coprime monic polynomials over $\mathbb{F}_2$. Then there exist pairwise coprime monic polynomials $f_1,f_2,\ldots,f_r$ over $\mathbb{Z}_4$ such that $f=f_1f_2\cdots f_r$ in $\mathbb{Z}_4[x]$ and $f_i~~(\mbox{mod} ~2)=g_i$, $i=1,2,\ldots,r$.
\end{theorem}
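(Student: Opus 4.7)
The plan is to prove the result by induction on $r$, reducing everything to the base case $r=2$, which is the actual lifting step. The inductive step is purely formal: assuming the $r=2$ case, group $g_2 \cdots g_r$ as a single factor (coprime to $g_1$ over $\mathbb{F}_2$ by pairwise coprimality), apply $r=2$ to obtain $f = f_1 \cdot h$ with $f_1, h$ monic, $f_1 \equiv g_1 \pmod{2}$ and $h \equiv g_2 \cdots g_r \pmod{2}$, and then apply the induction hypothesis to $h$. Coprimality of $f_1$ with each of the factors produced in the recursive step will follow from the general lifting argument used in the base case, so the only real work is at $r=2$.

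For the base case, I would proceed by the classical successive approximation argument. Choose arbitrary monic lifts $\tilde{g}_1, \tilde{g}_2 \in \mathbb{Z}_4[x]$ of $g_1, g_2$. Since $f \equiv \tilde{g}_1 \tilde{g}_2 \pmod{2}$, we can write
\[
f = \tilde{g}_1 \tilde{g}_2 + 2h, \qquad h \in \mathbb{Z}_4[x].
\]
I then look for monic corrections of the form $f_1 = \tilde{g}_1 + 2 b_1$, $f_2 = \tilde{g}_2 + 2 b_2$ with $\deg b_1 < \deg g_1$, $\deg b_2 < \deg g_2$. Expanding and using $4 = 0$ in $\mathbb{Z}_4$, the equation $f_1 f_2 = f$ reduces to the congruence $b_1 \tilde{g}_2 + b_2 \tilde{g}_1 \equiv h \pmod{2}$, i.e.\ $b_1 g_2 + b_2 g_1 \equiv h \pmod{2}$ in $\mathbb{F}_2[x]$. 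Since $\gcd(g_1, g_2) = 1$ in $\mathbb{F}_2[x]$, Bezout gives $a, c \in \mathbb{F}_2[x]$ with $a g_1 + c g_2 = 1$; multiplying by $h$ and reducing $ch$ modulo $g_1$ (and adjusting $ah$ correspondingly) yields $b_1, b_2$ of the required degrees. Lifting these to $\mathbb{Z}_4[x]$ produces $f_1, f_2$ as desired.

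For pairwise coprimality of $f_1, f_2$ in $\mathbb{Z}_4[x]$, I would lift the Bezout relation from $\mathbb{F}_2[x]$: pick $A, C \in \mathbb{Z}_4[x]$ reducing to $a, c$; then $A f_1 + C f_2 = 1 + 2s$ for some $s \in \mathbb{Z}_4[x]$. The point is that $2s$ is nilpotent in $\mathbb{Z}_4[x]$ because $(2s)^2 = 4 s^2 = 0$, so $1 + 2s$ is a unit with inverse $1 - 2s$. Multiplying through gives a genuine Bezout identity $A' f_1 + C' f_2 = 1$ in $\mathbb{Z}_4[x]$, which establishes coprimality. In the inductive step this same mechanism guarantees that $f_1$ is coprime with each $f_j$ for $j \geq 2$.

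The main obstacle I anticipate is not the existence of the lift itself but bookkeeping: ensuring that the $b_i$ are chosen with the correct degree bounds so that $f_1, f_2$ remain monic of the same degrees as $g_1, g_2$, and that pairwise coprimality is carried correctly through the inductive unfolding (each factor must be coprime to every other factor, not just to the ``remaining product''). Both issues are handled by performing the division with remainder in the base-case Bezout step and by invoking the nilpotence trick for each pair independently.
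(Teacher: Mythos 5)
Your proof is correct, but there is nothing in the paper to compare it against: the paper states Hensel's Lemma as a known result imported from Pless--Qian \cite{pless} and gives no proof at all. What you have written is the standard argument --- reduce to $r=2$ by induction, write $f=\tilde g_1\tilde g_2+2h$, solve $b_1g_2+b_2g_1\equiv h\pmod 2$ via Bezout in $\mathbb{F}_2[x]$ with the remainder taken modulo $g_1$ to control degrees, and recover coprimality over $\mathbb{Z}_4$ by lifting the Bezout identity and inverting $1+2s$ using $(2s)^2=0$. All the steps check out: since $f$ and $\tilde g_1\tilde g_2$ are both monic of degree $n=\deg g_1+\deg g_2$, the correction term $h$ can be taken of degree less than $n$, which is exactly what makes the degree bound $\deg b_2<\deg g_2$ come out of the division-with-remainder step, so $f_1,f_2$ stay monic of the right degrees. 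The nilpotence trick you use for coprimality is the same mechanism the paper itself uses later (in its lemma that $f,g\in R[x]$ are coprime iff $\overline f,\overline g$ are coprime in $\overline R[x]$), so your argument is fully consistent with the paper's toolkit; it simply supplies a proof where the paper chose to cite one.
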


A Gray map  $\phi : R^n \rightarrow \mathbb{Z}_4^{2n}$ is defined by (see \cite{yildiz})
\[(\overline{a} + u\overline{b}) \mapsto (\overline{b}, \overline{a} + \overline{b})~.\]
The  Lee weight is defined on $R$ by
\[w_L(a+ub) = w_L(b, a+b)~,\]
where $w_L(b, a+b)$ is the usual Lee weight of $(b, a+b)$ in $\mathbb{Z}_4^2$. This weight is then extended componentwise to $R^n$. The Lee weight of an element $x \in R^n$ is the sum of the Lee weights of the coordinates of $x$.

\begin{theorem}\cite{yildiz}\label{thm2.1}
The Gray map $\phi : R^n \rightarrow \mathbb{Z}_4^{2n}$ is a distance preserving linear isometry with respect to the Lee weights in $R^n$ and $\mathbb{Z}_4^2$.
\end{theorem}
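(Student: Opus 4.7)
The plan is to prove the three assertions---$\mathbb{Z}_4$-linearity, preservation of Lee weight, and preservation of Lee distance---in that order, since the last follows formally from the first two. In fact, the theorem is essentially a packaging of the definition: the Lee weight on $R$ was \emph{defined} so that $w_L(x) = w_L(\phi(x))$ for every $x \in R$, so the main work is checking linearity.

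First, I would verify $\mathbb{Z}_4$-linearity of $\phi$ on a single coordinate, since both the map and the target are built componentwise. Take $x = a_1 + u b_1$ and $y = a_2 + u b_2$ in $R$, and $c \in \mathbb{Z}_4$. A direct computation gives
\[
\phi(x + y) = (b_1 + b_2,\; a_1 + a_2 + b_1 + b_2) = (b_1, a_1 + b_1) + (b_2, a_2 + b_2) = \phi(x) + \phi(y),
\]
and similarly $\phi(cx) = (c b_1,\; c a_1 + c b_1) = c \phi(x)$. Extending componentwise to $R^n$ shows that $\phi : R^n \to \mathbb{Z}_4^{2n}$ is $\mathbb{Z}_4$-linear.

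Next, I would invoke the definition of the Lee weight on $R$: for $x = a + ub$, we set $w_L(x) = w_L(b, a+b)$, which is exactly $w_L(\phi(x))$. Extending componentwise, $w_L(\mathbf{x}) = \sum_{i=1}^n w_L(x_i) = \sum_{i=1}^n w_L(\phi(x_i)) = w_L(\phi(\mathbf{x}))$ for every $\mathbf{x} \in R^n$. Finally, combining linearity with weight preservation,
\[
d_L(\mathbf{x}, \mathbf{y}) = w_L(\mathbf{x} - \mathbf{y}) = w_L(\phi(\mathbf{x} - \mathbf{y})) = w_L(\phi(\mathbf{x}) - \phi(\mathbf{y})) = d_L(\phi(\mathbf{x}), \phi(\mathbf{y})),
\]
which is the required isometry.

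There is no real obstacle here: the definition of the Lee weight on $R$ is rigged precisely so that coordinatewise weight preservation is automatic, and the only step with any content is the linearity calculation, which is routine. If anything, the one subtlety worth flagging is that linearity is taken over the scalar ring $\mathbb{Z}_4$ (not over $R$, since $u$ does not act on $\mathbb{Z}_4^{2n}$), and this is exactly the sense in which it is needed to translate $d_L(\mathbf{x},\mathbf{y})$ into $w_L(\mathbf{x}-\mathbf{y})$ and back after applying $\phi$.
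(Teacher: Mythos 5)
Your proof is correct, and the verification you give (componentwise $\mathbb{Z}_4$-linearity, weight preservation directly from the definition of $w_L$ on $R$, and distance preservation via $d_L(\mathbf{x},\mathbf{y})=w_L(\mathbf{x}-\mathbf{y})$) is the standard argument. The paper itself offers no proof of this statement --- it is quoted from the reference on linear codes over $\mathbb{Z}_4+u\mathbb{Z}_4$ --- so there is nothing to compare against; your writeup supplies exactly the routine check that is being taken for granted, including the correct observation that linearity here is over $\mathbb{Z}_4$ rather than over $R$.
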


A linear code $C$ of length $n$ over $R$ is an $R$-submodule of $R^n$. $C$ may not be an $R$-free module. We can express $R^n$ as $R^n = \mathbb{Z}_4^n + u\mathbb{Z}_4^n$, and so a linear code $C$ of length $n$ over $R$ can be expressed as $C = C_1 + uC_2$, where $C_1, C_2$ are linear codes of length $n$ over $\mathbb{Z}_4$.
The Euclidean inner product of any two elements $x=(x_1, x_2, \ldots, x_n)$ and $y=(y_1, y_2, \ldots, y_n)$ of $R^n$ is defined as $x\cdot y = x_1y_1 + x_2y_2 + \cdots + x_ny_n$, where the operation is performed in $R$. The dual of a linear code $C$ is defined as $C^\perp = \{y \in R^n ~|~ x\cdot y = 0 ~\forall x \in C\}$. It follows immediately that if $C = C_1 + uC_2$ is a linear code over $R$, then $C^\perp = C_1^\perp + uC_2^\perp$. We define the \emph{rank} of a code $C$ as the minimum number of generators for $C$ and the \emph{free rank} of $C$ is the rank of $C$ if $C$ is a free module over $R$. There are two other codes associated with $C$, namely Tor$(C)$ and Res$(C)$ and are defined as Tor$(C)=\{ b \in \frac{\mathbb{Z}_4[x]}{\left\langle x^n-1 \right\rangle} ~:~ ub \in C \}$ and Res$(C)=\{a \in \frac{\mathbb{Z}_4[x]}{\left\langle x^n-1 \right\rangle} ~:~ a+ub \in C ~\mbox{for some} ~b \in \frac{\mathbb{Z}_4[x]}{\left\langle x^n-1 \right\rangle} \}$.

\section{Galois extension of $R$}

Let $n$ be an odd integer. We first consider the factorization of $x^n-1$ over $R$, as it plays a vital role in the study of cyclic codes over $R$ of length $n$.

\begin{theorem}\label{primitive} Let $g(x) \in \mathbb{F}_2[x]$ be a monic irreducible (primitive) divisor of $x^{2^r-1}-1$. Then there exists a unique monic basic irreducible (primitive) polynomial $f(x)$ in $R[x]$ such that $\overline{f(x)} = g(x)$ and $f(x)~|~(x^{2^r-1}-1)$ in $R[x]$.
\end{theorem}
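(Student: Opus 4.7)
The plan is to reduce to Hensel's Lemma (Theorem \ref{hensels}) by exploiting two facts: (i) $n = 2^r - 1$ is odd, so $x^n - 1$ is separable modulo $2$, and (ii) any polynomial in $\mathbb{Z}_4[x]$ sits inside $R[x]$ and, when reduced modulo $\langle 2, u \rangle$, gives the same image as its reduction modulo $2$.

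For existence, I would first observe that since $\gcd(n,2) = 1$, the polynomial $x^n - 1 \in \mathbb{F}_2[x]$ factors into pairwise coprime distinct monic irreducible factors. In particular, writing $x^n - 1 = g(x) h(x)$ in $\mathbb{F}_2[x]$ with $\gcd(g, h) = 1$, I apply Theorem \ref{hensels} to $x^n - 1 \in \mathbb{Z}_4[x]$ to obtain coprime monic polynomials $f(x), \tilde{h}(x) \in \mathbb{Z}_4[x]$ with $x^n - 1 = f(x)\tilde{h}(x)$, $f \pmod 2 = g$, and $\tilde h \pmod 2 = h$. Viewing $f(x) \in \mathbb{Z}_4[x] \subset R[x]$ and noting that the projection $R[x] \to \overline{R}[x] = \mathbb{F}_2[x]$ restricted to $\mathbb{Z}_4[x]$ coincides with reduction mod $2$, we get $\overline{f}(x) = g(x)$ in $\overline{R}[x]$. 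Since $g$ is irreducible (resp.\ primitive) in $\mathbb{F}_2[x]$, $f$ is by definition basic irreducible (resp.\ basic primitive) in $R[x]$, and $f(x) \mid (x^n-1)$ in $R[x]$.

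For uniqueness, suppose $f_1(x), f_2(x) \in R[x]$ are monic basic irreducible polynomials with $\overline{f_1} = \overline{f_2} = g$ that both divide $x^n - 1$ in $R[x]$, and write $x^n - 1 = f_i(x) q_i(x)$ in $R[x]$. Reducing modulo $\langle 2, u \rangle$ gives $g(x)\overline{q_i}(x) = x^n - 1$ in $\mathbb{F}_2[x]$, so unique factorization over $\mathbb{F}_2$ forces $\overline{q_1} = \overline{q_2} = h(x)$. Hence both $(f_1, q_1)$ and $(f_2, q_2)$ are Hensel lifts to $R[x]$ of the coprime factorization $g \cdot h$, and the uniqueness part of Hensel's Lemma applied over the local ring $R$ yields $f_1 = f_2$.

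The main obstacle is the uniqueness-of-lift statement over $R$, since the version in Theorem \ref{hensels} is stated only for $\mathbb{Z}_4[x]$. The standard proof extends verbatim to $R$ because $R$ is a finite local ring with maximal ideal $\mathfrak{m} = \langle 2, u \rangle$ satisfying $\mathfrak{m}^3 = 0$: a successive-approximation argument (modulo $\mathfrak{m}$, then modulo $\mathfrak{m}^2$, then modulo $\mathfrak{m}^3 = 0$) together with the coprimeness of the reduced factors gives both existence and uniqueness of lifts. I would either invoke this directly as the Hensel Lemma for finite local rings, or, to keep the argument self-contained, prove uniqueness by showing that the difference $f_1 - f_2$ has coefficients in $\mathfrak{m}$ and then iteratively push it into $\mathfrak{m}^2$ and $\mathfrak{m}^3 = 0$ using the relation $f_1 q_1 = f_2 q_2$ and Bezout coefficients for $g$ and $h$ lifted to $R$.
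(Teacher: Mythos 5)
Your existence argument is exactly the paper's: write $x^{2^r-1}-1 = g(x)g^{\prime}(x)$ over $\mathbb{F}_2$, apply Hensel's Lemma (Theorem \ref{hensels}) over $\mathbb{Z}_4$ to obtain a monic $f(x)\in\mathbb{Z}_4[x]$ with $f \equiv g \pmod 2$ and $f(x)\mid x^{2^r-1}-1$, view $f(x)$ inside $R[x]$, and observe that reduction modulo $\left\langle 2,u\right\rangle$ restricted to $\mathbb{Z}_4[x]$ coincides with reduction modulo $2$. Where you go beyond the paper is uniqueness: the paper's proof stops after existence and never addresses the word ``unique'' in the statement, whereas you reduce uniqueness to the uniqueness clause of Hensel lifting over the local ring $R$. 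Your two supporting observations are sound: reducing $x^n-1=f_i(x)q_i(x)$ modulo the maximal ideal forces $\overline{q_1}=\overline{q_2}$ by unique factorization in $\mathbb{F}_2[x]$, so both pairs lift the same coprime factorization; and the maximal ideal $\left\langle 2,u\right\rangle$ of $R$ satisfies $\left\langle 2,u\right\rangle^3=0$, so the successive-approximation proof of uniqueness of coprime monic lifts terminates after finitely many steps. (It is worth noting explicitly that each $q_i(x)$ is automatically monic because $f_i(x)$ is monic and $f_i(x)q_i(x)=x^n-1$ is monic, which is the normalization the uniqueness statement requires.) In short, your proposal matches the paper's route on existence and is strictly more complete on uniqueness, which the paper asserts but does not prove.
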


\begin{proof}
Let $x^{2^r-1}-1 = g(x)g^\prime(x)$ in $\mathbb{F}_2[x]$. By Hensel's lemma, there exist $f(x), f^\prime(x) \in \mathbb{Z}_4[x]$ such that $x^{2^r-1}-1 = f(x)f^\prime(x)$ in $\mathbb{Z}_4[x]$ and $f(x)~(\textrm{mod}~2) = g(x)$, $f^\prime(x)~(\textrm{mod}~2) = g^\prime(x)$.
Since $\mathbb{Z}_4$ is a subring of $R$,  $f(x) \in R[x]$. Also $\overline{f(x)} = f(x)~(\textrm{mod}~\left\langle2, u\right\rangle) = g(x)$ and $f(x)~|~(x^{2^r-1}-1)$ in $R[x]$.
 \end{proof}

We call the polynomial $f(x)$ in Theorem (\ref{primitive}) the \emph{Hensel lift} of $g(x)$ to $R$.

Since $n$ is odd, it follows from \cite[Theorem XIII.11]{mcdonald} that $x^n-1$ factorizes uniquely into pairwise coprime basic irreducible polynomials over $R$. Let
\[x^n-1= f_1f_2\cdots f_m \]
be such a factorization of $x^n-1$. Then it follows from the Chinese Remainder Theorem that
\[\frac{R[x]}{\left\langle x^n-1\right\rangle} = \oplus_{i=1}^m \frac{R[x]}{\left\langle f_i\right\rangle}~.\]
Therefore every ideal $I$ of $\frac{R[x]}{\left\langle x^n-1\right\rangle}$ can be expressed as $I = \oplus_{i=1}^m I_i$, where $I_i$ is an ideal of the ring $R[x]/\left\langle f_i\right\rangle$, $i=1, 2, \ldots, m$.

Let us recall the Galois extension of $\mathbb{Z}_4$. Let $h(x)$ be a monic basic irreducible polynomial of degree $r$ in $\mathbb{Z}_4[x]$. Then the Galois ring $GR(4,r)$ over $\mathbb{Z}_4$ is defined as the residue class ring $\frac{\mathbb{Z}_4[x]}{\left\langle h(x) \right\rangle}$. The ring $GR(4,r)$ is a local ring with unique maximl ideal $\left\langle 2 \right\rangle$ and the residue field $\mathbb{F}_{2^r}$.

Let $\mathcal{T}=\{0,1, \xi, \xi ^2, \ldots, \xi ^{2^r-2} \}$ be the \textit{Teichm\"{u}ller} representatives of $GR(4,r)$, where $\xi$ is a root of a basic primitive polynomial of degree $r$ in $\mathbb{Z}_4[x]$. Then each element $a$ of $GR(4,r)$ can be written as $a=a_0+2a_1$, where $a_0,a_1 \in \mathcal{T}$. This representation is called the $2$-adic representation of elements of $GR(4,r)$.

Now we consider the Galois extension of $R$. Let $f(x)$ be a basic irreducible polynomial of degree $r$ in $R[x]$. Then the Galois extension of $R$ is defined as the quotient ring $\frac{R[x]}{\left\langle f(x) \right\rangle}$ and is denoted by $GR(R,r)$. 
If $\alpha$ is a root of $f(x)$ then the elements of $GR(R,r)$ can  uniquely be written as $m_0 + m_1 \alpha +m_2 \alpha^2 + \cdots + m_{r-1} \alpha^{r-1}$, $m_i \in  R$, $i=0,1, \ldots, r-1$, i.e. $GR(R,r)$ is free module of rank $r$ over $R$ with a basis $\{1, \alpha, \alpha^2, \ldots, \alpha^{r-1}\}$ and $|GR(R,r)|=16^r$. From Theorem (\ref{local}), it follows that the ring $GR(R,r)$ is a local ring with unique maximal ideal $\left\langle \left\langle 2,~u \right\rangle + \left\langle f \right\rangle \right\rangle$ and the residue field $\mathbb{F}_{2^r}$. Furthermore, 
\[GR(R,r) \simeq \frac{GR(4,r)[u]}{\left\langle u^2 \right\rangle} \simeq GR(4,r) \oplus uGR(4,r)~, \]
where $GR(4,r)$ is the Galois ring of degree $r$ over $\mathbb{Z}_4$  and $u^2=0$.

Therefore, an element $x$ of $GR(R,r)$ can be represented as $x=a+ub$, where $a,b \in GR(4,r)$. Using the $2$-adic representation of $a=a_0+2a_1$, $b=a_2+2a_3$, $a_0,a_1,a_2,a_3 \in \mathcal{T}$, the element $x \in GR(R,r)$ can further  be represented as $x=a_0+2a_1+ua_2+2ua_3$.

\begin{lemma} A non-zero element $x= a_0 + 2 a_1 + u a_2  + 2u a_3$ of $GR(R,r)$ is unit if and only if $a_0$ is non-zero in $\mathcal{T}$.
\end{lemma}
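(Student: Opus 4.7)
The plan is to use the fact, already established just before the statement, that $GR(R,r)$ is a local ring with unique maximal ideal $\mathfrak{m} = \langle 2, u\rangle + \langle f\rangle$ and residue field $\mathbb{F}_{2^r}$. In any local ring, an element is a unit if and only if it does not lie in the maximal ideal, so the task reduces to characterizing membership in $\mathfrak{m}$ in terms of the $2$-adic coordinates $a_0, a_1, a_2, a_3 \in \mathcal{T}$.

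First I would analyze the residue map $\pi \colon GR(R,r) \to GR(R,r)/\mathfrak{m} \cong \mathbb{F}_{2^r}$. Since $2, u \in \mathfrak{m}$, the image of $x = a_0 + 2a_1 + u a_2 + 2u a_3$ is simply $\pi(a_0)$. The restriction $\pi|_{\mathcal{T}}$ is the canonical bijection between the Teichmüller set $\mathcal{T}$ and $\mathbb{F}_{2^r}$, so $\pi(a_0) = 0$ if and only if $a_0 = 0$. Hence $x \in \mathfrak{m}$ if and only if $a_0 = 0$, which gives the forward direction immediately: if $x$ is a unit then $x \notin \mathfrak{m}$, so $a_0 \neq 0$.

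For the converse, assume $a_0 \in \mathcal{T}\setminus\{0\}$. Then $a_0$ is a non-zero Teichmüller representative, hence a unit in $GR(4,r)$, and therefore a unit in the overring $GR(R,r)$. Factor
\[
x = a_0\bigl(1 + 2a_0^{-1}a_1 + u a_0^{-1}a_2 + 2u a_0^{-1}a_3\bigr) = a_0(1+y),
\]
where $y := 2a_0^{-1}a_1 + u a_0^{-1}a_2 + 2u a_0^{-1}a_3 \in \mathfrak{m}$. In a local ring $1+y$ is a unit for every $y$ in the maximal ideal (its inverse is the geometric series $\sum_{k\ge 0}(-y)^k$, which terminates here since $\mathfrak{m}$ is nilpotent in $GR(R,r)$). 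Thus $x$ is a product of two units and is itself a unit.

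The only step requiring a moment of care is the identification of the residue map on elements written in $2$-adic plus $u$-adic form; once one observes that the four summands $2a_1, u a_2, 2u a_3$ all lie in $\mathfrak{m}$ and that $\pi$ is a bijection on $\mathcal{T}$, both directions follow mechanically from the local-ring dichotomy between units and maximal ideal. No genuine obstacle is expected.
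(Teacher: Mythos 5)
Your proof is correct, but it takes a genuinely different route from the paper's. The paper's entire argument is the identity $x^4=a_0^4$: writing $x=a_0+m$ with $m=2a_1+ua_2+2ua_3$, one has $m^2=0$ (because $4=0$ and $u^2=0$ kill every term), hence $x^2=a_0^2+2a_0m$ and $x^4=a_0^4$; if $a_0\neq 0$ then $a_0^{2^r-1}=1$, so $x^4$ is a unit and therefore so is $x$, while if $a_0=0$ then $x$ is nilpotent and cannot be a unit. That computation is completely self-contained. You instead invoke the local-ring dichotomy (unit if and only if outside the maximal ideal $\left\langle 2,u\right\rangle$) together with the fact that reduction modulo the maximal ideal is a bijection on $\mathcal{T}$, and then handle the converse by factoring $x=a_0(1+y)$ with $y$ nilpotent. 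This is more conceptual, makes the ``only if'' direction transparent, and would work verbatim in any local ring carrying a Teichm\"{u}ller-type coordinate; the cost is that it leans on the assertion that $GR(R,r)$ is local with that maximal ideal, which the paper only formally justifies later (its Theorem on the ideals of $R[x]/\left\langle f\right\rangle$ is a forward reference at this point in the text). Since that theorem's proof does not use the present lemma there is no circularity, but the $x^4=a_0^4$ trick sidesteps the dependence entirely. Both arguments are sound; yours buys generality and clarity, the paper's buys brevity and logical independence from the ideal classification.
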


\begin{proof}
Since $x^4=a_0^4$ for every non-zero $x$ in $GR(R,r)$, the result follows.
\end{proof}

Thus the group of units of $GR(R,r)$, denoted by $GR(R,r)^{*}$, is given by \[GR(R,r)^{*}= \{ a_0 + 2 a_1 + u a_2 + 2u a_3  ~~:~~ a_0, a_1,a_2,a_3 \in \mathcal{T}, a_0 \ne 0 \}.\]

\begin{theorem} The group of units $GR(R,r)^{*}$ is a direct product of two groups $G_C$ and $G_A$, i.e., $GR(R,r)^*=G_C \times G_A$, where $G_C$ is a cyclic group of order $2^r-1$ and $G_A$ is an abelian group of order $8^r$.
\end{theorem}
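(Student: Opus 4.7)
The plan is to identify $G_C$ and $G_A$ explicitly as subgroups of $GR(R,r)^{*}$ and then verify the decomposition by an order count together with a trivial-intersection check, using that $GR(R,r)$ is commutative. Let $\xi$ be the Teichm\"uller generator so that $\mathcal{T}^{*} = \{1, \xi, \ldots, \xi^{2^r-2}\}$, and set $G_C = \langle \xi \rangle$. Since $\xi$ is a root of a basic primitive polynomial of degree $r$, its image in the residue field $\mathbb{F}_{2^r}$ is primitive, so the reduction map sends $G_C$ injectively onto $\mathbb{F}_{2^r}^{*}$; thus $G_C$ is cyclic of order $2^r-1$.

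Next, let $M$ denote the unique maximal ideal of $GR(R,r)$, and put $G_A = 1 + M$. By the preceding lemma every element of $M$ has the unique form $m = 2a_1 + u a_2 + 2u a_3$ with $a_1, a_2, a_3 \in \mathcal{T}$, so $|M| = (2^r)^3 = 8^r$. A direct expansion using $4 = 0$ and $u^2 = 0$ shows that each such $m$ satisfies $m^2 = 0$ (the only potentially nonzero cross terms are the two copies of $(2a_1)(u a_2)$, which add to $4 u a_1 a_2 = 0$). Hence $(1+m)(1-m) = 1$, so $G_A$ is closed under inversion; closure under multiplication is immediate from $M$ being an ideal. Commutativity of $GR(R,r)$ makes $G_A$ an abelian subgroup of order $8^r$.

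The intersection $G_C \cap G_A$ is trivial: any $\xi^i \in 1 + M$ reduces to $\overline{\xi}^{\,i} = 1$ in $\mathbb{F}_{2^r}$, which forces $(2^r-1) \mid i$ and hence $\xi^i = 1$. Because
\[ |G_C| \cdot |G_A| = (2^r-1) \cdot 8^r = |GR(R,r)^{*}|, \]
the product $G_C \, G_A$ must exhaust $GR(R,r)^{*}$, and combining this with the trivial intersection gives $GR(R,r)^{*} = G_C \times G_A$. An explicit factorization is
\[ a_0 + 2a_1 + u a_2 + 2u a_3 = a_0 \bigl( 1 + 2 a_0^{-1} a_1 + u a_0^{-1} a_2 + 2u a_0^{-1} a_3 \bigr), \]
which is legitimate because $\mathcal{T}$ is closed under multiplication inside $GR(4,r) \subset GR(R,r)$.

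I do not foresee any serious obstacle; the only step meriting care is the observation that $m^2 = 0$ for every individual $m \in M$, since the \emph{ideal} $M^2$ itself is nonzero (for instance, $2 \cdot u = 2u \in M^2$), so the argument must be made element-wise. Everything else is a routine internal-direct-product verification in a finite commutative unit group.
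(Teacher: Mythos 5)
Your proposal is correct, and it rests on the same decomposition as the paper: $G_C=\mathcal{T}^{*}$ and $G_A$ the units congruent to $1$ modulo the maximal ideal $M$. Where you differ is in how the direct product is verified. The paper defines the projection $\Gamma(a_0+2a_1+ua_2+2ua_3)=a_0$ onto the Teichm\"uller coordinate, observes it is a surjective multiplicative homomorphism with kernel $G_A$, and then asserts that the splitting $GR(R,r)^{*}\simeq G_C\times G_A$ ``can easily be seen.'' You instead carry out the internal-direct-product verification explicitly: you prove $1+M$ is a subgroup via the element-wise identity $m^2=0$ (so $(1+m)^{-1}=1-m$), check that $G_C\cap G_A$ is trivial by reducing to the residue field, and close with the order count $(2^r-1)\cdot 8^r=|GR(R,r)^{*}|$ plus the explicit factorization $a_0+2a_1+ua_2+2ua_3=a_0\bigl(1+2a_0^{-1}a_1+ua_0^{-1}a_2+2ua_0^{-1}a_3\bigr)$. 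Your route supplies exactly the details the paper elides (including the subtlety, which you rightly flag, that $m^2=0$ holds element-wise even though the ideal $M^2$ is nonzero), and it sidesteps the paper's somewhat garbled statement of the homomorphism property of $\Gamma$; the paper's quotient-map formulation is slightly slicker once one accepts that $\mathcal{T}^{*}$ splits the exact sequence, but yours is the more self-contained argument. Two minor remarks: the intersection argument needs $\overline{\xi}$ to have exact order $2^r-1$ in $\mathbb{F}_{2^r}^{*}$, which follows from primitivity of the reduced polynomial as you note; and the closure of $\mathcal{T}$ under multiplication is not actually needed for the factorization step, since $M$ is an ideal and hence absorbs multiplication by $a_0^{-1}$.
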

\begin{proof} Let $\xi$ be a primitive element of $GR(R, r)$ and $G_C = \mathcal{T}^\ast = \{1, \xi, \ldots, \xi^{2^r-2}\}$. Then $G_C$ is a multiplicative cyclic group of order $2^r$. Let $x= a_0 + 2 a_1 + u a_2 + 2u a_3 \in GR(R,r)^*$. Define a mapping $\Gamma : GR(R,r)^* \longrightarrow G_C$ such that $\Gamma(x)=a_0$. It can  easily be seen that for any $\alpha, x, y \in GR(R,r)^*$, $\Gamma(\alpha x+y)=\Gamma(\alpha) \Gamma(x) + \Gamma(y)$. $\Gamma$ is obviously a surjective map. Therefore $\frac{GR(R,r)^*}{ker \Gamma} \simeq G_C$, where ker $\Gamma=\{ 1+2a_1+ua_2+2ua_3 ~:~ a_1,a_2,a_3 \in \mathcal{T}  \}$. Denote Ker $\Gamma$ by $G_A$ . 
Then it can easily be seen that $GR(R,r) \simeq G_C \times G_A$. Moreover, $|GR(R,r)^*| = |G_c||G_A|=8^r(2^r-1)$.
 \end{proof}

The set of all zero divisors of $GR(R,r)$ is given by $\{ 2a_1+ua_2+2ua_3 ~~ :~~ a_1,a_2,a_3 \in \mathcal{T}   \}$, which is maximal ideal generated by $\left\langle 2, u \right\rangle$ in $GR(R,r)$. 

Now we consider the ideal structure of $GR(R,r)$. We first prove the following Lemma.

\begin{lemma}
Let $f(x), g(x) \in R[x]$. Then $f(x), ~g(x)$ are coprime if and only if their images $\overline{f}(x), ~\overline{g}(x)$ are coprime in $\overline{R}[x]$.
\end{lemma}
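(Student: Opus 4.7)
The plan is to handle the two directions separately, with the forward one being essentially immediate and the reverse one reducing to a nilpotency argument in the polynomial ring $R[x]$.

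For the forward direction, suppose $a(x)f(x) + b(x)g(x) = 1$ in $R[x]$. Apply the ring homomorphism $\mu : R[x] \to \overline{R}[x]$ termwise. Since $\mu$ sends $1$ to $1$, we obtain $\overline{a}(x)\overline{f}(x) + \overline{b}(x)\overline{g}(x) = 1$ in $\overline{R}[x]$, so $\overline{f}(x)$ and $\overline{g}(x)$ are coprime.

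For the reverse direction, assume $\overline{f}(x)$ and $\overline{g}(x)$ are coprime in $\overline{R}[x]$, and pick $\alpha(x),\beta(x) \in \overline{R}[x]$ with $\alpha\overline{f} + \beta\overline{g} = 1$. Lift $\alpha,\beta$ arbitrarily to polynomials $a(x),b(x) \in R[x]$. Then $a(x)f(x) + b(x)g(x) = 1 + h(x)$ for some polynomial $h(x) \in R[x]$ whose coefficients all lie in the maximal ideal $\mathfrak{m} = \langle 2, u \rangle$. The key observation is that $\mathfrak{m}$ is nilpotent in $R$: a direct calculation gives $\mathfrak{m}^2 = \langle 2u \rangle$ and $\mathfrak{m}^3 = 0$ (since $4 = 0$ and $u^2 = 0$ force $\mathfrak{m}\cdot\langle 2u\rangle = 0$).

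Next I would promote this nilpotency from $R$ to $R[x]$. Because every coefficient of $h$ is in $\mathfrak{m}$, every coefficient of $h^k$ lies in $\mathfrak{m}^k$; in particular $h(x)^3 = 0$ in $R[x]$. Hence $1 + h(x)$ is a unit in $R[x]$, with explicit inverse $1 - h(x) + h(x)^2$. Multiplying $a(x)f(x) + b(x)g(x) = 1 + h(x)$ by this inverse produces
\[
\bigl(\tilde a(x)\bigr) f(x) + \bigl(\tilde b(x)\bigr) g(x) = 1,
\]
where $\tilde a = (1+h)^{-1} a$ and $\tilde b = (1+h)^{-1} b$, proving that $f(x)$ and $g(x)$ are coprime in $R[x]$.

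The only non-routine step is the transition from the nilpotency of the ideal $\mathfrak{m} \subset R$ to the nilpotency of the polynomial $h(x) \in R[x]$; this is standard but must be handled carefully because a polynomial with nilpotent coefficients is itself nilpotent only when the ideal generated by those coefficients is nilpotent of some uniform index, which in our setting is ensured by $\mathfrak{m}^3 = 0$. Everything else is bookkeeping on the lift of a B\'ezout identity.
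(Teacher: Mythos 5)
Your proof is correct and follows essentially the same route as the paper: both arguments lift a B\'ezout identity from $\overline{R}[x]$ and then observe that the resulting perturbation $1+h(x)$, where $h(x)$ has all coefficients in $\left\langle 2,u\right\rangle$, is a unit in $R[x]$. The paper does this by writing $h(x)=2r(x)+us(x)$ and manipulating explicitly (which amounts to $h(x)^2=0$, giving the inverse $1-h(x)$), while you invoke $\left\langle 2,u\right\rangle^3=0$ to conclude $h(x)^3=0$ and use the inverse $1-h(x)+h(x)^2$; the difference is cosmetic.
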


\begin{proof}
If $f(x),~ g(x)$ are coprime, then it is immediate that $\overline{f}(x)$ and $\overline{g}(x)$ are coprime. Now suppose that $\overline{f}(x),~ \overline{g}(x)$ are coprime. Then there exist $a(x), b(x) \in R[x]$ such that
\[\overline{a}(x)\overline{f}(x) +\overline{b}(x)\overline{g}(x) = 1~.\]
Then there exits $r(x), s(x) \in R[x]$ such that
\begin{equation}\label{eq1}
a(x)f(x)+b(x)g(x) = 1+2r(x)+us(x)~.\end{equation}
Multiplying (\ref{eq1}) by $2r(x)$ and by $us(x)$, we respectively get equations:
\begin{eqnarray}
2r(x)a(x)f(x)+2r(x)b(x)g(x) = 2r(x)+2ur(x)s(x)~. \label{eq2}\\
us(x)a(x)f(x)+us(x)b(x)g(x) = us(x)+2ur(x)s(x)~.\label{eq3}
\end{eqnarray}
On adding (\ref{eq2}) and (\ref{eq3}), we get
\begin{equation}
a(x)(2r(x)+us(x))f(x)+b(x)(2r(x)+us(x))g(x) = 2r(x) + us(x)~.\label{eq4}
\end{equation}
Putting the value of $2r(x) + us(x)$ in (\ref{eq1}), we get
\[a(x)(1-2r(x)-us(x))f(x)+b(x)(1-2r(x)-us(x))g(x) = 1~.\]
Therefore $f(x)$ and $g(x)$ are coprime.
 \end{proof}

Now we consider the ideals of $R[x]/\left\langle f\right\rangle$, where $f$ is a basic irreducible polynomial over $R$. 

\begin{theorem}\label{local}
Let $f \in R[x]$ be a basic irreducible polynomial. Then the ideals of $R[x]/\left\langle f \right\rangle$ are precisely, $\{0\}$, $\left\langle 1+\left\langle f\right\rangle\right\rangle$, $\left\langle 2+\left\langle f\right\rangle\right\rangle$, $\left\langle u+\left\langle f\right\rangle\right\rangle$, $\left\langle 2u+\left\langle f\right\rangle\right\rangle$, $\left\langle 2+u+\left\langle f\right\rangle\right\rangle$ and $\left\langle \left\langle 2, u\right\rangle + \left\langle f \right\rangle \right\rangle$.
\end{theorem}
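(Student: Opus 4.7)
The plan is to prove this in two main stages: first establish that $S := R[x]/\langle f\rangle$ is a local ring whose unique maximal ideal $M$ is the image of $\langle 2,u\rangle$, and second, enumerate all ideals of $S$ by matching them with the ideal structure of $R$ via the natural inclusion $R\hookrightarrow S$.

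For the first stage, I observe that $S/M \cong \overline{R}[x]/\langle\overline{f}\rangle$. Since $\overline{f}$ is irreducible over $\overline{R}=\mathbb{F}_2$ (because $f$ is basic irreducible), this quotient is the field $\mathbb{F}_{2^r}$, and hence $M$ is maximal. For uniqueness, I would pick any $g+\langle f\rangle\notin M$; then $\overline{g}\notin\langle\overline{f}\rangle$, and irreducibility of $\overline{f}$ yields coprimality of $\overline{f}$ and $\overline{g}$ in $\mathbb{F}_2[x]$. The preceding Lemma lifts this to coprimality of $g$ and $f$ in $R[x]$, producing $a(x)g(x)+b(x)f(x)=1$, which exhibits $g+\langle f\rangle$ as a unit in $S$.

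For the second stage, each of the six proper ideals $\{0\},\langle 2u\rangle,\langle u\rangle,\langle 2\rangle,\langle 2+u\rangle,\langle 2,u\rangle$ of $R$ pushes forward along $R\hookrightarrow S$ to an ideal of $S$, and these, together with $S$ itself, give the seven ideals in the statement; distinctness is inherited from the chain relations $\{0\}\subsetneq\langle 2u\rangle\subsetneq\langle 2\rangle,\langle u\rangle,\langle 2+u\rangle\subsetneq\langle 2,u\rangle$ already present in $R$. To show there are no other ideals, I take an arbitrary $\alpha$ in a nonzero ideal $I\subseteq M$, write $\alpha=2s+ut$ with $s,t\in S$, and carry out a case analysis on whether $s,t$ lie in $M$. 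If $t\in M$, then $ut\in uM\subseteq\langle 2u\rangle$, forcing $\alpha\in\langle 2\rangle$; symmetrically, if $s\in M$, then $\alpha\in\langle u\rangle$. The remaining case is that both $s$ and $t$ are units, in which case $\alpha$ is associate to $2+uv$ for the unit $v=s^{-1}t$, and I must conclude $\alpha\in\langle 2+u\rangle$.

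The principal obstacle is this final case. The aim is to show $\langle 2+uv\rangle=\langle 2+u\rangle$ for every unit $v$, and the natural route is to establish $v-1\in M$: then $u(v-1)\in uM\subseteq\langle 2u\rangle\subseteq\langle 2+u\rangle$ (using $u(2+u)=2u$), and the decomposition $2+uv=(2+u)+u(v-1)$ places $2+uv$ inside $\langle 2+u\rangle$. Showing $v-1\in M$ amounts to showing every unit of $S$ reduces to $1$ in the residue field $S/M$, and this is the step requiring the most care; I would expect to lean on the structure theorem $GR(R,r)^{*}=G_C\times G_A$ from earlier in the section together with the explicit $2$-adic representation $a_0+2a_1+ua_2+2ua_3$ to push the argument through. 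Once this technicality is settled, the case analysis shows every element of $M$ lies in one of $\langle 2u\rangle,\langle u\rangle,\langle 2\rangle,\langle 2+u\rangle$, and hence every ideal of $S$ coincides with one of the seven listed.
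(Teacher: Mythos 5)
Your first stage is fine and matches the paper's opening argument: coprimality of $\overline{f}$ and $\overline{g}$ lifts to $R[x]$, so every element outside the image $M$ of $\left\langle 2,u\right\rangle$ is a unit and $S$ is local. The genuine gap is exactly where you suspected it would be. In the case where $s$ and $t$ are both units you need $\left\langle 2+uv\right\rangle=\left\langle 2+u\right\rangle$ for every unit $v$, and your proposed route --- showing $v-1\in M$, i.e.\ that every unit of $S$ reduces to $1$ in $S/M$ --- fails whenever $r=\deg f\geq 2$: the residue field is $\mathbb{F}_{2^r}$, not $\mathbb{F}_2$, so a unit reduces to an arbitrary nonzero element (the Teichm\"{u}ller element $\xi$ is a unit with $\xi-1\notin M$). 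Worse, the desired conclusion is itself false. Writing $S\simeq GR(4,r)\oplus uGR(4,r)$, one computes $(2+uv)(a+ub)=2a+u(va+2b)$, from which $\left\langle 2+uv\right\rangle=\left\langle 2+uw\right\rangle$ if and only if $v\equiv w \pmod{2GR(4,r)}$. Hence for $r\geq 2$ the elements $2+u\xi^{j}$, $0\leq j\leq 2^r-2$, generate $2^r-1>1$ pairwise distinct ideals, only one of which is $\left\langle 2+u\right\rangle$; equivalently, the ideals between $M^2=\left\langle 2u\right\rangle$ and $M$ correspond to $\mathbb{F}_{2^r}$-subspaces of the two-dimensional space $M/M^2$, of which there are $2^r+3$, not $5$. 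So the theorem as stated holds only for $\deg f=1$, where $R[x]/\left\langle f\right\rangle\cong R$.

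You should not fault your case analysis too harshly: the subcases $s\in M$ and $t\in M$ are handled correctly, and the paper's own proof has the identical hole. After reducing to $I\subseteq\left\langle \left\langle 2,u\right\rangle+\left\langle f\right\rangle\right\rangle$, the paper simply \emph{asserts} that the nonzero ideals contained in the maximal ideal are the five listed, with no argument; that assertion is precisely the false statement above. Your more careful attempt to actually prove it is what exposes the error (and with it, the subsequent corollary counting $7^m$ cyclic codes). A further, more minor, point: even where the element-by-element analysis succeeds, concluding that every \emph{ideal} inside $M$ is one of the listed ones requires an extra step (an ideal meeting two distinct ``lines'' outside $\left\langle 2u\right\rangle$ must be all of $M$), which your last sentence glosses over.
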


\begin{proof}
Let $I$ be a non-zero ideal of $R[x]/\left\langle f \right\rangle$. Let $h + \left\langle f \right\rangle \in R[x]/\left\langle f \right\rangle$. Since $f$ is basic irreducible, $\overline{f}$ is irreducible in $\overline{R}[x]$. Therefore gcd$(\overline{f}, \overline{h}) = 1$ or $\overline{f}$. Let gcd$(\overline{f}, \overline{h}) = 1$. Then $f$ and $h$ are coprime in R[x], and hence there exist $\lambda_1, \lambda_2 \in R[x]$ such that
\[\lambda_1f +\lambda_2h = 1~.\]
From this follows that $\lambda_2h = 1 (\textrm{mod}~f)$. Thus $h$ is an invertible element of $R[x]/\left\langle f \right\rangle$ and so $I = \left\langle 1+\left\langle f \right\rangle\right\rangle = R[x]/\left\langle f \right\rangle$.

Now suppose that gcd$(\overline{f}, \overline{h}) = \overline{f}$. Then there exists polynomials $g, f_1, f_2 \in R[x]$ such that
\[h = fg + 2f_1 + uf_2 ~,\]
and gcd$(\overline{f}, \overline{f}_1) =1$ or gcd$(\overline{f}, \overline{f}_2) =1$. It follows that $h + \left\langle f \right\rangle \in \left\langle \left\langle 2, u\right\rangle+\left\langle f\right\rangle\right\rangle$. Therefore if $I \ne \left\langle 1+\left\langle f\right\rangle\right\rangle$, then $I \subseteq \left\langle \left\langle 2, u\right\rangle+\left\langle f\right\rangle\right\rangle$. The non-zero ideals contained in $\left\langle \left\langle 2, u\right\rangle+\left\langle f\right\rangle\right\rangle$ are $\left\langle 2+\left\langle f\right\rangle\right\rangle$, $\left\langle u+\left\langle f\right\rangle\right\rangle$, $\left\langle 2u+\left\langle f\right\rangle\right\rangle$, $\left\langle 2+u+\left\langle f\right\rangle\right\rangle$ and $\left\langle \left\langle 2, u\right\rangle+\left\langle f\right\rangle\right\rangle$ itself. The result follows.
 \end{proof}

The Galois group $Gal(GR(R,r))$ of $Gal(R,r)$ is a cyclic group of order $(2^r-1)$, which is generated by the \emph{Frobenius automorphism} $\sigma$ on $GR(R,r)$ defined as $\sigma(x)=a_0^2+2a_1^2+ua_2^2+2ua_3^2$, where $x=a_0+2a_1+ua_2+2ua_3 \in R$. The automorphism $\sigma$ fixes the ring $R$.

\begin{example}\label{galois}
Consider the basic irreducible polynomial $h(x)=x^4+3x^3+2x^2+1$, which is the Hensel lift to $R$ of the polynomial $x^4+x^3+1 \in \mathbb{F}_2[x]$. Let $\xi$ be a root of $h(x)$. Then
 \begin{eqnarray*}
 \xi^4=\xi^3+2\xi^2+3, & \xi^5=3\xi^3+2\xi^2+3\xi+3, & \xi^6=\xi^3+\xi^2+3\xi+1, \\
 \xi^7=2\xi^3+\xi^2+\xi+3,  &  \xi^8=3\xi^3+\xi^2+\xi, & \xi^9=3\xi^2+3, \\
 \xi^{10}=3\xi^3+3\xi, & \xi^{11}=3\xi^3+\xi^2+1, & \xi^{12}=2\xi^2+\xi+1, \\
 \xi^{13}=2\xi^3+\xi^2+\xi, & \xi^{14}=3\xi^3+\xi^2+2\xi, & \xi^{15}=1.
 \end{eqnarray*}

 Let $\mathcal{T}=\{ 0,1,\xi,\xi^2,\xi^3,\xi^3+2\xi^2+3, 3\xi^3+2\xi^2+3\xi+3, \xi^3+\xi^2+3\xi+1, 2\xi^3+\xi^2+\xi+3, 3\xi^3+\xi^2+\xi, 3\xi^2+3, 3\xi^3+3\xi, 3\xi^3+\xi^2+1, 2\xi^2+\xi+1, 2\xi^3+\xi^2+\xi, 3\xi^3+\xi^2+2\xi \}$. Then $GR(R,4)=\{ a_0+2a_1+ua_2+2ua_3~ :~ a_i \in \mathcal{T}, i=0,1,2,3 \}$ and $|GR(R,4)|=4^{16}$.
 \end{example}

\section{Cyclic codes of odd length over $\mathbb{Z}_4+u\mathbb{Z}_4$}

 We assume that $n$ is odd throughout this section. For a finite chain ring $\mathcal{R}$, it is well known that the ring $\frac{\mathcal{R}[x]}{\left\langle x^n-1\right\rangle}$ is a principal ideal ring \cite{norton}. However, in the present case the ring $R$ is not a chain ring and the situation is not as straightforward. In fact, the ring $R_n = \frac{R[x]}{\left\langle x^n-1\right\rangle}$ is not in general a principal ideal ring, as the next result shows. The result is a generalization of \cite[Lemma 2.4]{yildiz1}.

\begin{theorem}
The ring $R_n = \frac{R[x]}{\left\langle x^n-1\right\rangle}$ is not a principal ideal ring.
\end{theorem}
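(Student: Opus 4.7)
The plan is to exhibit a single non-principal ideal of $R_n$, obtained by transferring the non-principality of the maximal ideal $\langle 2,u\rangle$ of $R$ through a Chinese Remainder Theorem decomposition.

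First, since $n$ is odd, we have the coprime factorization $x^n-1=(x-1)\overline{g}(x)$ in $\mathbb{F}_2[x]$. By Hensel's Lemma (Theorem~\ref{hensels}) this lifts to $x^n-1=(x-1)g(x)$ in $R[x]$, and by the coprimeness lemma established earlier in the paper, $(x-1)$ and $g(x)$ are coprime in $R[x]$ as well. The Chinese Remainder Theorem then yields a ring isomorphism $R_n\cong R\oplus R[x]/\langle g(x)\rangle$ via the map $a(x)+\langle x^n-1\rangle\mapsto(a(1),\,a(x)+\langle g(x)\rangle)$.

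Next, I would consider the ideal $I$ of $R_n$ corresponding to $\langle 2,u\rangle\oplus\{0\}$ under this isomorphism, and argue that $I$ cannot be principal. The key observation is that in any direct sum $S_1\oplus S_2$ of unital rings, the principal ideal generated by $(a,b)$ equals $\langle a\rangle_{S_1}\oplus\langle b\rangle_{S_2}$, because $(1,0)$ and $(0,1)$ lie in $S_1\oplus S_2$ and allow the two coordinates to be scaled independently. Consequently, if $I$ were principal it would have to be of the form $\langle (a,0)\rangle$ for some $a\in R$, forcing $\langle a\rangle=\langle 2,u\rangle$ inside $R$. To finish, I would invoke the explicit list of ideals of $R$ given in the Preliminaries: the principal ideals of $R$ are $\{0\},\,\langle 2u\rangle,\,\langle 2\rangle,\,\langle u\rangle,\,\langle 2+u\rangle$ and $R$, of cardinalities $1,2,4,4,4,16$ respectively, none of which equals the $8$-element ideal $\langle 2,u\rangle$. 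This contradiction shows that $I$ is non-principal, and therefore $R_n$ is not a principal ideal ring.

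The main obstacle is essentially bookkeeping: confirming that the CRT splitting is legitimate (which follows immediately once coprimeness of $(x-1)$ and $g(x)$ in $R[x]$ is noted via the earlier lemma) and that principal ideals in a finite direct product split componentwise. Neither step requires any deep computation, so the full proof should be quite short, and the \emph{real} content is the cardinality inspection of the ideals of $R$ already recorded in the Preliminaries.
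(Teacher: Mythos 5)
Your argument is correct, but it takes a genuinely different route from the paper. The paper does not invoke the Chinese Remainder Theorem at all: it simply takes the augmentation (evaluation at $x=1$) map $\gamma: R_n \to R$, which is a surjective ring homomorphism, sets $J=\gamma^{-1}(\langle 2,u\rangle)$, and observes that a surjective homomorphism carries a principal ideal to a principal ideal, so $J$ principal would force $\langle 2,u\rangle$ to be principal in $R$ --- contradicting the fact, recorded in the Preliminaries, that $\langle 2,u\rangle$ is not principal. Your proof instead splits $R_n\cong R\oplus R[x]/\langle g(x)\rangle$ via Hensel lifting and the coprimeness lemma, exhibits the ideal $\langle 2,u\rangle\oplus\{0\}$, and uses the componentwise splitting of principal ideals in a direct product; note that your ideal is not the same as the paper's $J$, which corresponds to $\langle 2,u\rangle\oplus R[x]/\langle g(x)\rangle$ under your isomorphism. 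Both proofs ultimately reduce to the non-principality of $\langle 2,u\rangle$ in $R$, and your cardinality inspection of the principal ideals of $R$ is a valid way to certify that fact (the paper simply asserts it). The trade-off is that the paper's preimage argument is shorter and works for arbitrary $n$, with no appeal to Hensel's Lemma, coprimeness, or the CRT, whereas your decomposition genuinely uses the hypothesis that $n$ is odd (to get $x-1$ coprime to $(x^n-1)/(x-1)$); on the other hand, your approach makes the ideal structure of $R_n$ more transparent and fits naturally with the CRT decomposition the paper uses elsewhere to count the cyclic codes.
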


\begin{proof}
Consider the augmentation mapping $\gamma: \frac{R[x]}{\left\langle x^n-1\right\rangle} \rightarrow R$ defined by
\[ \gamma(a_0+a_1x+\ldots +a_{n-1}x^{n-1}) = a_0~+~a_1~+~\ldots ~+~a_{n-1}~.\]
This is a surjective ring homomorphism. Consider now the ideal $I = \left\langle 2, u\right\rangle$ of $R$, which we know is not a principal ideal. Let $J = \gamma^{-1}(I)$. It is well known that the inverse image under a homomorphism of an ideal is an ideal. So  $J$ is an ideal of $\frac{R[x]}{\left\langle x^n-1\right\rangle}$. Now if we assume $J$ to be a principal ideal, then its homomorphic image $I$ must be principal, a contradiction. Hence $J$ is not a principal ideal and $\frac{R[x]}{\left\langle x^n-1\right\rangle}$ is therefore not a principal ideal ring.
 \end{proof}

Therefore, a cyclic code of length $n$ over $R$ is in general not principally generated.

 Since $n$ is odd, the ring $\frac{\mathbb{Z}_4[x]}{\left\langle x^n-1\right\rangle}$ is a principal ideal ring. Therefore a cyclic code of length $n$ over $R$ is of the form $C = C_1 + uC_2 = \left\langle g_1\right\rangle + u\left\langle g_2\right\rangle$, where $g_1, g_2 \in \mathbb{Z}_4[x]$ are generator polynomials of the cyclic codes $C_1$, $C_2$, respectively.

Let $\tau$ be the standard cyclic shift operator on $R^n$. A linear code $C$ of length $n$ over $R$ is cyclic if $\tau(c) \in C$ whenever $c \in C$, i. e., if $(c_0, c_1, \ldots, c_{n-1}) \in C$, then $(c_{n-1}, c_0, c_1, \ldots, c_{n-2}) \in C$. As usual, in the polynomial representation, a cyclic code of length $n$ over $R$ is an ideal of $\frac{R[x]}{\left\langle x^n-1\right\rangle}$.

\begin{theorem}
 Let $x^n-1= f_1f_2\cdots f_m$, where  $f_i$, $i=1,2,\ldots,m$ are basic irreducible pairwise coprime polynomials in $R[x]$. Then any ideal in $R_n$ is the sum of the ideals of $R[x]/\left\langle f_i\right\rangle$, $i=1, 2, \ldots, m$.
\end{theorem}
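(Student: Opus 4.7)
The plan is to proceed by the Chinese Remainder Theorem, which has already been invoked in the preliminaries once the pairwise coprimality of the $f_i$ is in hand. First I would observe that the hypothesis that the $f_i$ are pairwise coprime in $R[x]$ lets us write
\[
R_n \;=\; \frac{R[x]}{\left\langle x^n-1\right\rangle} \;\simeq\; \bigoplus_{i=1}^m \frac{R[x]}{\left\langle f_i\right\rangle},
\]
exactly as noted right after the Hensel-lift discussion. I would verify the CRT isomorphism by writing down the natural map $h+\left\langle x^n-1\right\rangle \mapsto (h+\left\langle f_1\right\rangle,\ldots,h+\left\langle f_m\right\rangle)$ and appealing to the pairwise coprimality (via the earlier lemma equating coprimality over $R$ with coprimality modulo $\left\langle 2,u\right\rangle$) to produce the idempotents $e_1,\ldots,e_m$ with $e_i \equiv 1 \pmod{f_i}$ and $e_i \equiv 0 \pmod{f_j}$ for $j\neq i$, which simultaneously gives surjectivity and that the $e_i$ are orthogonal idempotents summing to $1$ in $R_n$.

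Next I would use the standard fact that any ideal $I$ of a finite direct product of rings decomposes as a direct sum of ideals of the factors. Concretely, given an ideal $I \subseteq R_n$, I would let $I_i = e_i I$, which is an ideal of $e_i R_n \simeq R[x]/\left\langle f_i\right\rangle$. Writing any $c \in I$ as $c = \sum_i e_i c$ and noting $e_i c \in I_i$ shows $I \subseteq \bigoplus_{i=1}^m I_i$; the reverse inclusion is immediate since each $I_i \subseteq I$ because $e_i \in R_n$ and $I$ is closed under multiplication by ring elements. Thus $I = \bigoplus_{i=1}^m I_i$, which under the CRT identification is precisely the sum of the corresponding ideals of $R[x]/\left\langle f_i\right\rangle$ inside $R_n$.

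There is no real obstacle here, since the factorization of $x^n-1$ into pairwise coprime basic irreducibles is already established for odd $n$ via the earlier reference to \cite[Theorem XIII.11]{mcdonald}, and CRT plus the idempotent decomposition does the rest; the only care needed is to keep the CRT isomorphism explicit enough that the correspondence between ideals of $R_n$ and tuples of ideals of the summands $R[x]/\left\langle f_i\right\rangle$ is unambiguous, so that the phrase ``sum of the ideals of $R[x]/\left\langle f_i\right\rangle$'' is interpreted as sum inside $R_n$ under the canonical embeddings $e_i R_n \hookrightarrow R_n$.
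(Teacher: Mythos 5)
Your proposal is correct and follows the same route as the paper, which simply cites the Chinese Remainder Theorem; you have merely filled in the standard details (the idempotent decomposition and the fact that ideals of a finite direct product decompose factorwise) that the paper leaves implicit.
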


\begin{proof}
It follows from the  Chinese Remainder Theorem.
 \end{proof}

\begin{corollary}
The number cyclic codes over $R$ is $7^m$.
\end{corollary}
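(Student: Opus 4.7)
The plan is to combine the Chinese Remainder decomposition of $R_n$ already recorded in the excerpt with the ideal enumeration in Theorem \ref{local}. Since a cyclic code of length $n$ over $R$ is by definition an ideal of $R_n = R[x]/\langle x^n-1\rangle$, counting cyclic codes reduces to counting ideals of $R_n$.

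First I would invoke the factorization $x^n-1 = f_1 f_2 \cdots f_m$ into pairwise coprime basic irreducible polynomials over $R$ (which exists and is unique because $n$ is odd, as justified earlier via \cite[Theorem XIII.11]{mcdonald}). The Chinese Remainder Theorem then gives the ring isomorphism
\[
R_n \;\cong\; \bigoplus_{i=1}^m \frac{R[x]}{\langle f_i\rangle}.
\]
Under this isomorphism, ideals of $R_n$ correspond bijectively to tuples $(I_1, I_2, \ldots, I_m)$ where each $I_i$ is an ideal of $R[x]/\langle f_i\rangle$; this is the standard fact that ideals of a finite direct product of rings are products of ideals of the factors, and it is essentially the content of the theorem immediately preceding the corollary.

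Next I would apply Theorem \ref{local} to each factor. Since each $f_i$ is basic irreducible, that theorem tells us that $R[x]/\langle f_i\rangle$ has exactly seven ideals, namely
\[
\{0\},\ \langle 1\rangle,\ \langle 2\rangle,\ \langle u\rangle,\ \langle 2u\rangle,\ \langle 2+u\rangle,\ \langle 2,u\rangle
\]
(modulo $\langle f_i\rangle$). Consequently the number of admissible tuples is $7^m$, and this is exactly the number of ideals of $R_n$, hence the number of cyclic codes of length $n$ over $R$.

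There is no real obstacle here: the whole argument is a bookkeeping consequence of two results stated just above the corollary. The only point that requires a moment of care is to confirm that the CRT bijection between ideals of a direct sum and tuples of ideals is being applied correctly (so that no distinct tuples collapse to the same ideal and every ideal is captured); but this is immediate from the existence of orthogonal idempotents produced by CRT, so the count $7^m$ is exact rather than an upper bound.
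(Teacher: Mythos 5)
Your proof is correct and follows exactly the paper's own argument: decompose $R_n$ via the Chinese Remainder Theorem into the factors $R[x]/\langle f_i\rangle$ and multiply the seven ideals per factor given by Theorem \ref{local}. Your additional remark justifying that the CRT correspondence between ideals and tuples is a genuine bijection is a welcome bit of extra care, but the route is the same.
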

\begin{proof}
Each ideal of $R_n$ is a direct sum of the ideals of $R[x]/\left\langle f_i\right\rangle$, $i=1, 2. \ldots, m$. From Theorem (\ref{local}) and for each $i$, $R[x]/\left\langle f_i\right\rangle$ has $7$ ideals. The result follows.
 \end{proof}
\begin{theorem}\label{thm2.2}
A linear code $C = C_1 + uC_2$ of length $n$ over $R$ is cyclic if and only if $C_1$, $C_2$ are cyclic codes of length $n$ over $\mathbb{Z}_4$.
\end{theorem}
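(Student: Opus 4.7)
The plan is to exploit the internal direct sum $R^n = \mathbb{Z}_4^n \oplus u\mathbb{Z}_4^n$ of $\mathbb{Z}_4$-modules: every $c \in R^n$ writes uniquely as $c = a + ub$ with $a, b \in \mathbb{Z}_4^n$. In particular, if $a + ub$ lies in $\mathbb{Z}_4^n$ then $b = 0$, and if it lies in $u\mathbb{Z}_4^n$ then $a = 0$. The underlying cancellation is that $ub = 0$ for $b \in \mathbb{Z}_4$ forces $b = 0$, which one checks on the four elements of $\mathbb{Z}_4$. Moreover, the cyclic shift $\tau$ acts coordinatewise and commutes with multiplication by $u$, so $\tau(a + ub) = \tau(a) + u\tau(b)$ for $a, b \in \mathbb{Z}_4^n$. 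I will use these two facts repeatedly.

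The reverse direction is immediate: if $C_1, C_2$ are $\mathbb{Z}_4$-cyclic and $c = c_1 + uc_2 \in C_1 + uC_2$ with $c_i \in C_i$, then $\tau(c) = \tau(c_1) + u\tau(c_2) \in C_1 + uC_2 = C$. For the forward direction, assume $C$ is cyclic. Given $c_1 \in C_1$, observe that $c_1 = c_1 + u \cdot 0 \in C$, hence $\tau(c_1) \in C$ and so decomposes as $a + ub$ with $a \in C_1, b \in C_2$. Since $\tau(c_1)$ already lies in $\mathbb{Z}_4^n$, the uniqueness observation forces $b = 0$, giving $\tau(c_1) = a \in C_1$. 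For $c_2 \in C_2$, note $uc_2 \in C$, so $u\tau(c_2) = \tau(uc_2) \in C$ decomposes as $a + ub$ with $a \in C_1, b \in C_2$; since $u\tau(c_2)$ lies in $u\mathbb{Z}_4^n$, uniqueness gives $a = 0$, and the cancellation $ux = uy \Rightarrow x = y$ for $x,y \in \mathbb{Z}_4^n$ then yields $\tau(c_2) = b \in C_2$.

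There is no genuine obstacle; the whole argument is a bookkeeping exercise once the unique decomposition $R^n = \mathbb{Z}_4^n \oplus u\mathbb{Z}_4^n$ and the compatibility of $\tau$ with multiplication by $u$ are noted. The only subtle point to be careful about is the cancellation involving $u$, which must be verified inside $\mathbb{Z}_4$ rather than inside $R$ (where $u$ is a zero divisor).
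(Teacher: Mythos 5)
Your proof is correct and follows essentially the same route as the paper's: apply the shift $\tau$, use $\tau(a+ub)=\tau(a)+u\tau(b)$, and recover the components via the direct sum $R^n=\mathbb{Z}_4^n\oplus u\mathbb{Z}_4^n$. You are in fact slightly more careful than the paper, which asserts $\tau(c_1)\in C_1$ and $\tau(c_2)\in C_2$ directly without spelling out the uniqueness of the decomposition and the cancellation $ub=0\Rightarrow b=0$ that you make explicit.
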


\begin{proof}
Let $c_1+uc_2 \in C$, where $c_1 \in C_1$ and $c_2 \in C_2$. Then $\tau(c_1+uc_2) = \tau(c_1) +u\tau(c_2) \in C$, since $C$ is cyclic and $\tau$ is a linear map. So, $\tau(c_1) \in C_1$ and $\tau(c_2) \in C_2$. Therefore $C_1, C_2$ are cyclic codes. Conversely if $C_1$, $C_2$ are cyclic codes, then for any $c_1+uc_2 \in C$, where $c_1 \in C_1$ and $c_2 \in C_2$, we have $\tau(c_1) \in C_1$ and $\tau(c_2) \in C_2$, and so, $\tau(c_1+uc_2) = \tau(c_1) +u\tau(c_2) \in C$. Hence $C$ is cyclic.
 \end{proof}

The following result gives a sufficient condition for a cyclic code $C$ over $R$ to be a free $\mathbb{Z}_4$-code.

\begin{theorem}\label{freeZ4}
Let $C = C_1 + uC_2$ be a cyclic code of length $n$ over $R$. If $C_1$, $C_2$ are free codes over $\mathbb{Z}_4$, then $C$ is  a free $\mathbb{Z}_4$-module.
\end{theorem}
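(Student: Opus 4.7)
The plan is to exhibit an explicit $\mathbb{Z}_4$-basis of $C$ built from $\mathbb{Z}_4$-bases of $C_1$ and $C_2$. The key structural observation is that $R$ itself is a free $\mathbb{Z}_4$-module of rank $2$ with basis $\{1,u\}$, so that $R^n = \mathbb{Z}_4^n \oplus u\mathbb{Z}_4^n$ as an internal direct sum of $\mathbb{Z}_4$-modules. Everything else is a matter of chasing this decomposition through the expression $C=C_1+uC_2$.

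First, I would show that the sum $C_1+uC_2$ is actually a direct sum of $\mathbb{Z}_4$-modules. Since $C_1 \subseteq \mathbb{Z}_4^n$ and $uC_2 \subseteq u\mathbb{Z}_4^n$, and these two submodules of $R^n$ intersect trivially by the freeness of $R$ over $\mathbb{Z}_4$, we get $C_1 \cap uC_2 = \{0\}$, and hence $C = C_1 \oplus uC_2$ as $\mathbb{Z}_4$-modules.

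Next I would argue that $uC_2$ is itself $\mathbb{Z}_4$-free whenever $C_2$ is. The map $\psi : C_2 \to uC_2$, $c \mapsto uc$, is a surjective $\mathbb{Z}_4$-module homomorphism; it is also injective, because for $c \in \mathbb{Z}_4^n$ the equality $uc=0$ in $R^n$ forces $c=0$ (again since $\{1,u\}$ is a $\mathbb{Z}_4$-basis of $R$). Hence $\psi$ is an isomorphism, and if $\{w_1,\ldots,w_\ell\}$ is a $\mathbb{Z}_4$-basis of $C_2$, then $\{uw_1,\ldots,uw_\ell\}$ is a $\mathbb{Z}_4$-basis of $uC_2$.

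Finally, combining the two steps, if $\{v_1,\ldots,v_k\}$ is a $\mathbb{Z}_4$-basis of $C_1$, then $\{v_1,\ldots,v_k,uw_1,\ldots,uw_\ell\}$ is a $\mathbb{Z}_4$-basis of $C=C_1\oplus uC_2$, so $C$ is a free $\mathbb{Z}_4$-module (of rank $k+\ell$). There is no real obstacle here; the statement is essentially a formal consequence of the $\mathbb{Z}_4$-module decomposition $R=\mathbb{Z}_4 \oplus u\mathbb{Z}_4$. The only point one has to be careful about is not conflating $\mathbb{Z}_4$-freeness with $R$-freeness, so I would state at the end that no claim is being made about $C$ as an $R$-module, which is the subject of later sections.
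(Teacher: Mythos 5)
Your proof is correct and follows essentially the same route as the paper's: both exhibit $\{v_1,\ldots,v_k,uw_1,\ldots,uw_\ell\}$ as a $\mathbb{Z}_4$-basis of $C$, with your direct-sum language $C=C_1\oplus uC_2$ just making explicit the step the paper uses implicitly when it splits a dependence relation into its $\mathbb{Z}_4^n$ and $u\mathbb{Z}_4^n$ parts. If anything, your version is slightly more careful, since you justify why $u\sum b_jw_j=0$ forces $\sum b_jw_j=0$ (injectivity of multiplication by $u$ on $\mathbb{Z}_4^n$), a point the paper passes over silently.
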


\begin{proof}
Suppose that $C_1$, $C_2$ are $\mathbb{Z}_4$-free codes of ranks $k_1$, $k_2$, respectively. Let $\{c_{11}, c_{12}, \ldots, c_{1k_1}\}$ and $\{c_{21}, c_{22}, \ldots, c_{2k_2}\}$ be $\mathbb{Z}_4$-bases of $C_1$ and $C_2$, respectively. Then the set $\{c_{11}, c_{12}, \ldots, c_{1k_1}, uc_{21}, uc_{22}, \ldots,$ $uc_{2k_2}\}$ spans $C$, as every element of $C$ can be expressed as a linear combination of elements of this set. Now suppose there exist scalars $a_i, b_j \in \mathbb{Z}_4$ such that
\[ \sum_{i=1}^{k_1} a_ic_{1i} + u\sum_{j=1}^{k_2} b_jc_{2j} = 0~.\]
Then $\sum_{i=1}^{k_1} a_ic_{1i} =0$ and $\sum_{j=1}^{k_2} b_jc_{2j} = 0$. Since the elements $c_{11}, c_{12}, \ldots, c_{1k_1}$ are independent and so are the elements $c_{21}, c_{22}, \ldots, c_{2k_2}$, therefore $a_i=0$ and $b_j=0$ for all $i$ and $j$. Hence $C$ is a $\mathbb{Z}_4$-free module.
 \end{proof}

The converse of the above Theorem is not true in general, i. e., if a cyclic code $C = C_1 + uC_2$ is a free $\mathbb{Z}_4$-module of length $n$ over $R$, then $C_1$ or $C_2$ may not be a free code of length $n$ over $\mathbb{Z}_4$ (see example \ref{ex2}). However, if $C$ is an $R$-free module (code) of length $n$ over $R$ then $C_1$ must be a  free code of length $n$ over $\mathbb{Z}_4$ (see Theorem \ref{R free to z4 free}).

\begin{example}\label{ex1}
The polynomial $x^7-1$ factorizes into irreducible polynomials over $\mathbb{F}_2$ as $x^7-1 = (x-1)(x^3+x+1)(x^3+x^2+1)$. The Hensel lifts of $x^3+x+1$ and $x^3+x^2+1$ to $\mathbb{Z}_4 $ are $x^3+2x^2+x-1$ and $x^3-x^2-2x-1$, respectively. Therefore $x^3+2x^2+x-1$ and $x^3-x^2-2x-1$ are divisors of $x^7-1$ over $\mathbb{Z}_4$. Define $C = \left\langle x^3+2x^2+x-1\right\rangle + u \left\langle x^3-x^2-2x-1 \right\rangle$. Then $C$ is a cyclic code of length $7$ over $R$, which is also a free $\mathbb{Z}_4$-module.
\end{example}

\begin{example}\label{ex2} Let $C=C_1+uC_2$ be a free $\mathbb{Z}_4$-cyclic code of length $5$ over $R$ generated by $g(x)=u+2x+ux^2$. Then $C_1$ is a  cyclic code of length $5$ over $\mathbb{Z}_4$ generated by $g(x)~ (\mbox{mod}~ u)=2x$ which is not $\mathbb{Z}_4$- free.
\end{example}

Now we consider the general form of the generators of cyclic codes over $R$.

Define $\psi: R \rightarrow \mathbb{Z}_4$ such that $\psi(a+bu)=a~ (\mbox{mod}~ u)$. It can easily be seen that $\psi$ is a ring homomorphisms with ker $\psi$ $= \left\langle u \right\rangle$ $=u\mathbb{Z}_4$. Extended $\psi$ to the homomorphism $\phi: \frac{R[x]}{\left\langle x^n-1\right\rangle} \rightarrow \frac{\mathbb{Z}_4[x]}{\left\langle x^n-1\right\rangle}$ such that $\phi(a_0+a_1x+a_2x^2+ \ldots + a_{n-1}x^{n-1})=\psi(a_0)+\psi(a_1)x+\psi(a_2)x^2+ \ldots + \psi(a_{n-1})x^{n-1}$. Let $C$ be a cyclic code of length $n$ over $R$. Restrict $\phi$ to $C$ and define
\[J=\{ h(x) \in \frac{\mathbb{Z}_4[x]}{\left\langle x^n-1\right\rangle} ~:~ uh(x) \in \mbox{ker}~\phi\}~.\]
 Clearly $J$ is an ideal of $\frac{\mathbb{Z}_4[x]}{\left\langle x^n-1\right\rangle}$. So $J$ is a cyclic code over $\mathbb{Z}_4$ and $J=\left\langle a(x) \right\rangle$ for some $a(x) \in \mathbb{Z}_4[x]$. Therefore ker $\phi$ $= \left\langle ua(x) \right\rangle $. Similarly, the image of $C$ under $\phi$ is an ideal of $\frac{\mathbb{Z}_4[x]}{\left\langle x^n-1\right\rangle}$ and $\phi(C)=\left\langle g(x) \right\rangle$ for some $g(x) \in \mathbb{Z}_4[x]$. Hence $C=\left\langle g(x)+up(x),~ua(x) \right\rangle$ for some $p(x) \in \mathbb{Z}_4[x]$. Since $ug(x)= u(g(x)+up(x)) \in C$ and $\phi(ug(x))=0$, so $a(x) \mid g(x)$. Thus a cyclic code $C$ over $R$ has the form
 \[C = \left\langle g(x)+up(x), ua(x)\right\rangle~,\]
 where $g(x), p(x), a(x) \in \mathbb{Z}_4[x]$ and $a(x)\mid g(x)$. In particular if $a(x)=g(x)$, we have the following result.

\begin{theorem}
Let $n$ be an odd integer and $C$ be a cyclic code of length $n$ over $R$ such that $C=\left\langle g(x)+up(x), ~ug(x) \right\rangle$. Then $C=\left\langle g(x)+up(x) \right\rangle$.
\end{theorem}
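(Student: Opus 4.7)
The plan is essentially to observe that the second generator $ug(x)$ is redundant because it can be obtained from the first generator by multiplication by the scalar $u \in R$. The key identity is $u^2 = 0$ in $R$.

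Concretely, I would compute
\[
u\bigl(g(x) + up(x)\bigr) = ug(x) + u^2 p(x) = ug(x),
\]
since $u^2 = 0$ in $R$ (and hence in $R_n = R[x]/\langle x^n - 1\rangle$). This shows $ug(x) \in \langle g(x) + up(x)\rangle$, from which the containment
\[
\langle g(x) + up(x), \, ug(x)\rangle \subseteq \langle g(x) + up(x)\rangle
\]
follows. The reverse containment $\langle g(x) + up(x)\rangle \subseteq \langle g(x) + up(x), \, ug(x)\rangle$ is immediate from the definition of the ideal on the right.

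There is no obstacle in this proof; it is a one-line consequence of the nilpotency of $u$. The hypothesis of odd $n$ and the general structure theorem preceding this statement (which forces $a(x) \mid g(x)$ with the particular choice $a(x) = g(x)$ here) are only used to justify that the code was in the stated two-generator form to begin with. The content of the theorem is the observation that, in this special case where the ``extra'' generator is $ug(x)$ rather than $ua(x)$ with $a$ a proper divisor of $g$, the second generator drops out automatically.
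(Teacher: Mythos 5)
Your proof is correct and is essentially identical to the paper's: both observe that $u\bigl(g(x)+up(x)\bigr)=ug(x)$ because $u^2=0$, so the second generator lies in the ideal generated by the first. Your write-up is in fact slightly more explicit about the nilpotency step than the paper's one-line argument.
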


\begin{proof}
Clearly $\left\langle g(x)+up(x) \right\rangle \subseteq C$. Since $u(g(x)+up(x))=ug(x)$ and $g(x)=a(x)$, $C \subseteq \left\langle g(x)+up(x) \right\rangle$. Hence $C=\left\langle g(x)+up(x) \right\rangle$.
\end{proof}

It may be noted here that unlike in the case of finite fields, a generator polynomial of ker $\phi$ or $\phi(C)$ may not necessarily divide $x^n-1$.
The proof of the following result is straightforward, as the result is well known for codes over finite fields.
\begin{theorem}\label{thm4.9}  Let $C$ be a cyclic code of length $n$ over $R$. If $C=\left\langle g(x)+up(x), ~ua(x) \right\rangle$ and $deg~g(x)=k_1$ and $deg~a(x)=k_2$, then $C$ has rank $2n-k_1-k_2$ and a minimal spanning set $A=\{ (g(x)+up(x)), x(g(x)+up(x)), x^2(g(x)+up(x)), \cdots, x^{n-k_1-1}(g(x)+up(x)),~ua(x), xua(x),x^2ua(x), \cdots, x^{n-k_2-1}ua(x) \}$.
\end{theorem}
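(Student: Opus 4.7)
The plan is to show that $A$ is a $\mathbb{Z}_4$-basis of $C$; since $|A|=(n-k_1)+(n-k_2)=2n-k_1-k_2$, this simultaneously pins down the rank and certifies that $A$ is a minimal spanning set. The argument mirrors the classical finite-field proof that $\{g,xg,\ldots,x^{n-\deg g-1}g\}$ is a basis of $\langle g\rangle$, carried out in parallel on the ``$(g+up)$-layer'' and the ``$ua$-layer'' of $C$ and glued together by the relation $a\mid g$. Throughout I take $g$ and $a$ to be monic divisors of $x^n-1$ in $\mathbb{Z}_4[x]$ (the standard choice of generator polynomials for the cyclic codes $\phi(C)$ and $J$), and write $h_g=(x^n-1)/g$ and $h_a=(x^n-1)/a$.

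For spanning, I would take an arbitrary $c\in C$, write $c=F_1(g+up)+F_2\cdot ua$ with $F_1,F_2\in R_n$, and split $F_i=F_i^{(0)}+uF_i^{(1)}$ into $\mathbb{Z}_4[x]$-parts. Using $u^2=0$ together with $g=ag'$ (so that $ug=g'\cdot ua$), this reduces $c$ to the form $P(g+up)+Q\cdot ua$ with $P,Q\in\mathbb{Z}_4[x]/\langle x^n-1\rangle$. Dividing $P=h_gq_1+r_1$ with $\deg r_1<n-k_1$ in $\mathbb{Z}_4[x]$, the congruence $h_g(g+up)\equiv uh_gp\pmod{x^n-1}$ combined with the structural identity $uh_gp\in C\cap uR_n=u\langle a\rangle$ (which yields $h_gp\equiv as_0\pmod{x^n-1}$ for some $s_0$) absorbs the $q_1$-contribution into the $ua$-coefficient, giving $c\equiv r_1(g+up)+Q''\cdot ua\pmod{x^n-1}$ for some $Q''\in\mathbb{Z}_4[x]/\langle x^n-1\rangle$. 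A second division $Q''=h_aq_2+r_2$ with $\deg r_2<n-k_2$, together with $h_a\cdot ua\equiv 0\pmod{x^n-1}$, eliminates the $q_2$-contribution and leaves $c\equiv r_1(g+up)+r_2\cdot ua\pmod{x^n-1}$, visibly a $\mathbb{Z}_4$-linear combination of the elements of $A$.

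For independence, I would assume $\alpha(x)(g+up)+u\beta(x)a\equiv 0\pmod{x^n-1}$ in $R[x]$ with $\alpha,\beta\in\mathbb{Z}_4[x]$ of degrees at most $n-k_1-1$ and $n-k_2-1$ respectively. Expanding gives $\alpha g+u(\alpha p+\beta a)\equiv 0\pmod{x^n-1}$, and matching the $\mathbb{Z}_4$-part and the $u$-part in the unique decomposition $R[x]=\mathbb{Z}_4[x]\oplus u\mathbb{Z}_4[x]$ yields $\alpha g\equiv 0$ and $\alpha p+\beta a\equiv 0$ in $\mathbb{Z}_4[x]/\langle x^n-1\rangle$. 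Since $\deg(\alpha g)\leq n-1$, the first forces $\alpha g=0$ in $\mathbb{Z}_4[x]$, and monicity of $g$ gives $\alpha=0$; the second then collapses to $\beta a=0$, and monicity of $a$ gives $\beta=0$.

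The hardest step will be the absorption in the spanning argument, specifically the identity $a\mid h_gp$. This is not explicit in the theorem statement; it must be extracted from the construction preceding the theorem, where $a$ was chosen so that $u\langle a\rangle=C\cap\ker\phi$. Since $uh_gp=h_g(g+up)\in C$ and obviously lies in $\ker\phi$, it must lie in $u\langle a\rangle$, giving $a\mid h_gp$. Once this identity is secured, the rest is a careful transcription of the finite-field argument, with the two layers of $R=\mathbb{Z}_4+u\mathbb{Z}_4$ tracked separately and $u^2=0$ keeping cross-terms from proliferating.
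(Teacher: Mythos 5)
The paper offers no proof to compare against here---it simply declares the result ``straightforward, as the result is well known for codes over finite fields''---so your proposal must stand on its own, and it has a genuine gap at the very first step: you assume that $g$ and $a$ may be taken to be \emph{monic divisors of} $x^n-1$. The paper explicitly warns, in the paragraph immediately preceding this theorem, that a generator polynomial of $\ker\phi$ or of $\phi(C)$ need not divide $x^n-1$; indeed the canonical generator of a cyclic code of odd length over $\mathbb{Z}_4$ has the form $fh+2fg$ with $x^n-1=fgh$, which in general is neither monic nor a divisor of $x^n-1$ (for $C_1=\left\langle 2\right\rangle$ the generator is the constant $2$). Every load-bearing step of your argument leans on this assumption: the divisions by $h_g=(x^n-1)/g$ and $h_a=(x^n-1)/a$ in the spanning part, and the leading-coefficient cancellation ``$\alpha g=0$ and $g$ monic $\Rightarrow\alpha=0$'' in the independence part. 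None of these survives for, say, a $g$ all of whose coefficients lie in $2\mathbb{Z}_4$. (This is precisely why the \emph{next} theorem in the paper reinstates the hypotheses that $g$ is regular and $a$ is monic before refining the spanning set.)

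More fundamentally, the strategy of exhibiting $A$ as a $\mathbb{Z}_4$-\emph{basis} cannot succeed in the stated generality, because $A$ need not be $\mathbb{Z}_4$-linearly independent at all: take $g=a=2$ and $p=0$, so that $C=\left\langle 2\right\rangle\subseteq R_n$ with $k_1=k_2=0$; then $2\cdot(g+up)=0$ is a nontrivial dependence among the elements of $A$, even though $A$ is still a minimal generating set of the correct cardinality $2n$. The ``rank'' in this paper is the minimum number of generators of a (generally non-free) module, and ``minimal spanning set'' must be read as ``no proper subset generates,'' not ``basis''; establishing minimality therefore requires a different device (e.g., a counting argument or reduction modulo the maximal ideal) rather than a proof of linear independence. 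Under your added regular/monic hypotheses the two-layer decomposition, the absorption identity $a\mid h_gp$, and the degree bookkeeping are all sound and would yield a clean proof of a correct statement---but it is a weaker statement than the one the theorem actually asserts.
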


In Theorem (\ref{thm4.9}), if we put the restriction on $g(x)$ and $a(x)$ such that they are regular and monic polynomials, respectively, over $\mathbb{Z}_4$, then the minimal spanning set of $C$ reduces to that of \cite[Theorem 3]{abualrub}. To prove this, we first prove the following lemma, which appears as an exercise (Exercise XIII.6) in  \cite[p. 273]{mcdonald}.

\begin{lemma}Let $f(x)$ and $g(x)$ be two polynomials in $R[x]$. If $g(x)$ is regular, then there exists polynomials $q(x)$ and $r(x)$ such that $f(x)=g(x)q(x)+r(x)$, deg $r(x) < $ deg $g(x)$.
\end{lemma}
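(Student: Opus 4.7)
The plan is to run an $\mathfrak{m}$-adic version of the classical polynomial division algorithm, where $\mathfrak{m}=\langle 2,u\rangle$ denotes the unique maximal ideal of $R$. The main obstacle is that the leading coefficient of $g(x)$ need not be a unit in $R$: regularity only guarantees that $\overline{g}(x)\neq 0$ in $\overline{R}[x]\cong\mathbb{F}_2[x]$, so some---but not necessarily the top---coefficient of $g(x)$ is a unit. Consequently, the usual degree-reduction step fails if one tries to clear the top coefficient of $f(x)$ directly in $R[x]$, and indeed the quotient $q(x)$ will generally have degree larger than $\deg f-\deg g$ (as a small $\mathbb{Z}_4$ example with $g(x)=2x+1$ already shows). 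The idea is to first perform the division modulo $\mathfrak{m}$, where $\overline{R}$ is a field and division is routine, and then successively correct the error on the graded pieces $\mathfrak{m}^k/\mathfrak{m}^{k+1}$. Since $\mathfrak{m}^3=0$ in $R$, this iteration must terminate after finitely many stages.

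First, set $m:=\deg g(x)$ and $d:=\deg\overline{g}(x)$, so $d\le m$ with $\overline{g}(x)$ having unit leading coefficient in the field $\overline{R}$. The standard division algorithm in $\overline{R}[x]$ gives $\overline{f}(x)=\overline{g}(x)\,\overline{q}_0(x)+\overline{\rho}_0(x)$ with $\deg\overline{\rho}_0<d\le m$. Lifting $\overline{q}_0,\overline{\rho}_0$ arbitrarily to $q_0,\rho_0\in R[x]$ of the same degree, the residual $h_1(x):=f(x)-g(x)q_0(x)-\rho_0(x)$ has every coefficient in $\mathfrak{m}$.

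Next, iterate. By induction on $k\ge 1$, I would construct $\widetilde q_k,\widetilde \rho_k\in R[x]$ with $\deg\widetilde \rho_k<m$ and such that $h_k(x):=f(x)-g(x)\widetilde q_k(x)-\widetilde \rho_k(x)$ has all coefficients in $\mathfrak{m}^k$. For the inductive step, view $h_k$ modulo $\mathfrak{m}^{k+1}$ as a polynomial with coefficients in the $\overline{R}$-vector space $\mathfrak{m}^k/\mathfrak{m}^{k+1}$; choosing an $\overline{R}$-basis of this quotient, divide against $\overline{g}(x)$ coordinate-by-coordinate in $\overline{R}[x]$ and recombine to obtain $Q(x),P(x)\in(\mathfrak{m}^k/\mathfrak{m}^{k+1})[x]$ with $\deg P<d\le m$. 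Lifting $Q,P$ to elements of $R[x]$ whose coefficients lie in $\mathfrak{m}^k$ and adding them to $\widetilde q_k,\widetilde \rho_k$ produces $\widetilde q_{k+1},\widetilde \rho_{k+1}$ satisfying the inductive hypothesis at level $k+1$. Because $\mathfrak{m}^3=0$, after finitely many stages this yields $q,r\in R[x]$ with $f(x)=g(x)q(x)+r(x)$ and $\deg r(x)<m$; the remainder-degree bound is preserved throughout because each new contribution to $\widetilde \rho_k$ has degree strictly less than $d\le m$, while $\widetilde q_k$ is allowed to grow in degree without restriction.
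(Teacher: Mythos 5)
Your proof is correct, but it takes a genuinely different route from the paper's. The paper reduces immediately to the monic case: since $g(x)$ is regular, it invokes the cited structure theorem (McDonald, Theorem XIII.6) to factor $g(x)=v(x)g^{*}(x)$ with $v(x)$ a unit of $R[x]$ and $g^{*}(x)$ monic satisfying $\overline{g^{*}}(x)=\overline{g}(x)$, divides $f(x)$ by the monic polynomial $g^{*}(x)$ in the ordinary way, and absorbs $v(x)^{-1}$ into the quotient; the bound $\deg r<\deg g$ then follows from $\deg g^{*}\le\deg g$. Your argument instead runs the division adically with respect to the maximal ideal $M=\left\langle 2,u\right\rangle$: divide over the residue field $\overline{R}\cong\mathbb{F}_2$, lift, and correct the error on the successive quotients $M^{k}/M^{k+1}$, terminating because $M^{3}=0$. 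The two proofs differ in what they rely on: the paper's is shorter but rests entirely on the unit-times-monic factorization of regular polynomials over a finite local ring, whereas yours is self-contained, uses only that $R$ is local with nilpotent maximal ideal and that division works over the residue field, and therefore transfers verbatim to any such ring. Both arguments in fact deliver the sharper bound $\deg r<\deg\overline{g}$ (in the paper's notation $\deg g^{*}=\deg\overline{g}$), and both accommodate the phenomenon you correctly flag, namely that the quotient may have degree exceeding $\deg f-\deg g$. Your inductive step is sound at the one point where care is needed: because the correction terms have coefficients in $M^{k}$, their product with $g(x)$ taken modulo $M^{k+1}$ depends only on $\overline{g}(x)$, which is exactly what legitimizes the coordinatewise division over $\overline{R}$ on the $\overline{R}$-vector space $M^{k}/M^{k+1}$.
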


\begin{proof} \label{division algorithm} Since $g(x)$ is regular, by \cite[Theorem XIII.6]{mcdonald} there exists a monic polynomial $g^{*}(x) \in R[x]$ such that $g(x)=v(x) g^{*}(x)$, where $v(x)$ is a unit in $R[x]$.

Since $g^{*}(x)$ is monic, by division algorithm, there exists $q^\prime(x)$ and $r(x)$ in $R[x]$ such that $f(x)=g^{*}(x)q^\prime(x)+r(x)$, where deg $r(x) < $ deg $g^{*}(x)$. On multiplying both sides by $v(x)$, we get $v(x)f(x)=v(x)g^{*}(x)q^\prime(x)+v(x)r(x)$, from which we get $f(x)=g(x)q(x)+r(x)$, where $q(x)=(v(x))^{-1}q^\prime(x)$.

Since $g^{*}(x)$ is monic, so deg $g(x)$ $\geq$ deg $g^{*}(x)$, as deg $g(x)=$ deg $v(x)+$ deg $g^{*}(x)$. From this follows that deg $r(x) <$ deg $g(x)$.
 \end{proof}

The following result is a generalization of \cite[Theorem 3]{abualrub} in the present setting.

\begin{theorem}  Let $C=\left\langle g(x)+up(x), ~ua(x) \right\rangle$ be a cyclic code of length $n$ over $R$, and $g(x)$ is regular and $a(x)$ is monic in $\mathbb{Z}_4[x]$ with $deg~g(x)=k_1$ and $deg~a(x)=k_2$, respectively. Then $C$ has rank $n-k_2$ and a minimal spanning set $B=\{ (g(x)+up(x)),$ $x(g(x)+up(x)),$ $x^2(g(x)+up(x)), \cdots, $ $x^{n-k_1-1}(g(x)+up(x)),$ $~ua(x),$ $xua(x), x^2ua(x),$ $\cdots, x^{k_1-k_2-1}ua(x) \}$.
\end{theorem}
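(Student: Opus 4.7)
The plan is to begin from the $R$-spanning set $A$ that Theorem~\ref{thm4.9} provides for $C$, and to show that under the extra hypotheses (regularity of $g$, monic-ness of $a$) every ``excess'' element $x^j u a(x)$ of $A$ with $k_1 - k_2 \le j \le n - k_2 - 1$ already lies in the $R$-span of the smaller set $B$. The elements $x^i(g+up)$, $0 \le i \le n-k_1-1$, are common to both spanning sets and need no attention. I would then separately argue that the $n-k_2$ elements in $B$ form a minimal generating set.

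The key ingredient for the spanning step is Lemma~\ref{division algorithm}. Since $g$ is regular, for any such $j$ one can divide $x^j a(x)$ by $g(x)$ in $\mathbb{Z}_4[x]$ to obtain $q(x), r(x) \in \mathbb{Z}_4[x]$ with $x^j a = g q + r$ and $\deg r < k_1$. Multiplying by $u$ and using $u^2 = 0$ gives
\[
u\, x^j a(x) \;=\; u g(x) q(x) + u r(x) \;=\; q(x) \cdot u\bigl(g(x)+u p(x)\bigr) + u r(x),
\]
so the first summand is an $R$-combination of the $x^i(g+up)$ with $0 \le i \le \deg q = j + k_2 - k_1 \le n - k_1 - 1$, all of which lie in $B$. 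For the second summand, the divisibility $a \mid g$ in $\mathbb{Z}_4[x]$ (write $g = ah$, $\deg h = k_1 - k_2$) forces $r = x^j a - g q = a(x^j - h q)$, so $r = a s$ with $\deg s < k_1 - k_2$. Thus $u r = \sum_{i < k_1-k_2} s_i \, x^i u a$ is a $\mathbb{Z}_4$-combination of the remaining elements of $B$. Combining the two pieces completes the reduction of $A$ to $B$.

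For the rank statement I would invoke Nakayama's lemma: since $R$ is local with maximal ideal $\mathfrak{m} = \langle 2, u\rangle$ and residue field $\mathbb{F}_2$, the minimum number of $R$-generators of $C$ equals $\dim_{\mathbb{F}_2}(C/\mathfrak{m} C)$, so it suffices to check that the images in $C/\mathfrak{m} C$ of the $n - k_2$ elements of $B$ are $\mathbb{F}_2$-independent. This is the main obstacle: $\mathfrak{m} C$ contains $u(g+up) = ug = u(ah)$, which couples the two families of generators, so one must describe $\mathfrak{m} C$ explicitly and rule out any nontrivial $\mathbb{F}_2$-relation among the classes of the $x^i(g+up)$ and the $x^i u a$ that falls inside it. The argument parallels the $\mathbb{Z}_4$-analogue \cite[Theorem~3]{abualrub} with $u$ playing the role of $2$, exploiting that the projection modulo $u$ recovers the $\mathbb{Z}_4$-cyclic code $\langle g\rangle$ to pin down the relations.
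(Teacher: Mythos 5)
Your spanning argument is essentially the one the paper gives: both reduce the surplus generators $x^{j}ua(x)$, $k_1-k_2\le j\le n-k_2-1$, of the large spanning set from Theorem \ref{thm4.9} into the span of $B$ by dividing by the regular polynomial $g$ (the paper's division lemma) and showing the remainder is $ua(x)$ times a polynomial of degree at most $k_1-k_2-1$. The only real difference is how the remainder is recognized as a multiple of $ua(x)$: you invoke the divisibility $a(x)\mid g(x)$ to write $r=a\,(x^{j}-hq)$ and get the degree bound from $a$ being monic, whereas the paper argues that $ur(x)\in C$ cannot involve $g+up$ for degree reasons and hence lies in $\left\langle ua(x)\right\rangle$. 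Both work; yours is cleaner, but you should note that $a\mid g$ comes from the normal form of cyclic codes established before the theorem, not from the theorem's stated hypotheses, so it needs to be cited as a standing assumption.

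The genuine gap is in the second half. Spanning only gives $\mathrm{rank}(C)\le n-k_2$; minimality of $B$, i.e.\ the equality, is precisely the part you defer. Your Nakayama framework is sound ($C$ is a finitely generated module over the local ring $R$ with maximal ideal $M=\left\langle 2,u\right\rangle$, so the minimal number of generators is $\dim_{\mathbb{F}_2}C/MC$), but you stop exactly where the work starts: you never describe $MC$ nor rule out a relation $\sum_i\alpha_i x^{i}(g+up)+\sum_j\beta_j x^{j}ua\in MC$ with some coefficient a unit, and you yourself flag the coupling term $u(g+up)=ug=uah$ as the obstacle without resolving it. The paper instead finishes by a direct coefficient computation: assuming $s(x)(g(x)+up(x))\equiv 0$ and $ut(x)a(x)\equiv 0\pmod{x^{n}-1}$ with $\deg s\le n-k_1-1$ and $\deg t\le k_1-k_2-1$, it replaces $g+up$ by its monic associate $g^{*}$ (via \cite[Theorem XIII.6]{mcdonald} and regularity) and compares coefficients from the top down to force $s=0$, then uses monic-ness of $a$ to force $t=0$. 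Some argument of this kind (or an explicit computation of $MC$) must be supplied; as written, your proposal proves only the upper bound on the rank, not the stated value $n-k_2$.
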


\begin{proof}
Suppose $C=\left\langle g(x)+up(x), ua(x) \right\rangle$ with $deg ~g(x)=k_1$ and $deg ~a(x)=k_2$, where $g(x)$ is regular and $a(x)$ is monic in $\mathbb{Z}_4[x]$. To prove $B$ is the minimal spanning set of $C$, it suffices to show that
$B$ spans the span of $A=\{ (g(x)+up(x)), x(g(x)+up(x)), x^2(g(x)+up(x)),$ $\cdots, x^{n-k_1-1}(g(x)+up(x)),$ $~ua(x), xua(x),x^2ua(x),\cdots,$  $x^{n-k_2-1}ua(x) \}$. For this, we first show that $x^{k_1-k_2}ua(x) \in $ Span $B$.

Since $g(x)$ is regular, then so is $(g(x)+up(x))$. By Lemma (\ref{division algorithm}), $x^{k_1-k_2}ua(x)=u(g(x)+up(x))q(x)+ur(x)$, where $r(x)=0$ or $\textrm{deg}~r(x) < k_1$, and $q(x) \in \mathbb{Z}_4[x]$. This implies that $ur(x) \in C$.
Since deg $r(x) <$ deg $(g(x)+up(x))$, so if $r(x) \ne 0$, then it cannot be expressed as a linear combination of $(g(x)+up(x))$ and its multiples. Therefore, $ur(x)=ua(x)b(x)$ for some $ b(x) \in R[x]$.

Since $a(x)$ is monic, so deg $ur(x) =$ deg $ua(x)+$ deg $b(x)$. From this follows that deg $b(x) \leq k_1 - k_2-1$. Thus, we get $x^{k_1-k_2}ua(x)=u(g(x)+up(x))q(x)+ua(x)b(x)$ with deg $b(x) \leq k_1-k_2-1$. It follows that $x^{k_1-k_2}ua(x) \in $ span $B$.

Similarly, we can show that $x^{k_1-k_2+1}ua(x)$, $x^{k_1-k_2+2} ua(x)$, $\cdots$, $x^{n-k_2-1}ua(x)$ are in span $B$. Hence $B$ is a minimal generating set of $C$.

To prove the linear independence of $B$, assume that
$s(x)(g(x)+up(x))=0$ $(\mbox{mod}~x^n-1)$  and $ut(x)a(x)=0~~(\mbox{mod}~x^n-1)$ for some $s(x)=s_0+s_1x+s_2x^2+\cdots+s_{n-k_1-1}x^{n-k_1-1}  \in R[x]$ and $t(x)=t_0+t_1x+t_2x^2+\cdots+t_{k_1-k_2-1}x^{k_1-k_2-1} \in \mathbb{Z}_4[x]$.

Since $g(x)+up(x)$ is regular, by \cite[Theorem XIII.6]{mcdonald} there exists a monic polynomial $g^{*}(x) \in R[x]$ such that $(g(x)+up(x))=v(x) g^{*}(x)$, where $v(x)$ is a unit $R[x]$.
Therefore, $s(x)v(x)g^{*}(x)=0~~(\mbox{mod}~x^n-1)$, which implies $s(x)g^{*}(x)=0~~(\mbox{mod}~x^n-1)$, as $v(x)$ is a unit in $R[x]$.
Let $g^{*}(x)=g^{*}_0+g^{*}_1x+g^{*}_2x^2+\cdots+g^{*}_{t}x^{t}$, where $t \leq n-k_1-1$. Then \[ \left(s_0+s_1x+s_2x^2+\cdots+s_{n-k_1-1}x^{n-k_1-1}\right)\left(  g^{*}_0+g^{*}_1x+g^{*}_2x^2+\cdots+g^{*}_{t}x^{t} \right) =0~~ (\mbox{mod}~x^n-1).\]
By comparing the coefficient of highest power of $x$ on both sides, we get $s_{n-k_1-1}g^{*}_t=0$, from which follows that $s_{n-k_1-1}=0$, as $g^{*}_t$ is a unit in $R$.
Again by comparing the coefficient of next highest power of $x$, we get $s_{n-k_1-1}g^{*}_{t-1}+s_{n-k_1-2}g^{*}_{t}=0$, which implies that $s_{n-k_1-2}=0$. On continuing this way, we get $s_i=0$ for $i=0,1,\ldots,s_{n-k_1-3}$. Similarly we can show that $t_i=0$ for all $i=0,1,\ldots,t_{k_1-k_2-1}$. Therefore $B$ is linearly independent.
 \end{proof}

\begin{theorem} Let $C= \left\langle g(x)+up(x), ua(x) \right\rangle$ be a cyclic code of length $n$ over $R$. Then $w_H(C)=w_H(\mbox{ker} ~\phi)$, i.e., $w_H(C)=w_H(\left\langle ua(x) \right\rangle)$.
\end{theorem}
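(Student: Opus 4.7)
The plan is to prove the two inequalities $w_H(C)\le w_H(\ker\phi)$ and $w_H(C)\ge w_H(\ker\phi)$ separately. The first direction is immediate: since $ua(x)\in C$, we have $\ker\phi=\langle ua(x)\rangle\subseteq C$, so the minimum Hamming weight of $C$ cannot exceed that of $\ker\phi$.

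For the non-trivial direction, I would pick an arbitrary nonzero codeword $c=(c_0,\ldots,c_{n-1})\in C$ and produce a nonzero element of $\ker\phi$ of weight at most $w_H(c)$. If $c\in\ker\phi$ there is nothing to show, so assume $c\notin\ker\phi$. The key move is to multiply $c$ by $u$. Since $C$ is an ideal, $uc\in C$, and $\phi(uc)=0$ trivially, so $uc\in\ker\phi$. I would then verify the two properties that make this work: $uc\neq 0$ and $\operatorname{supp}(uc)\subseteq\operatorname{supp}(c)$.

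For $uc\neq 0$, write each coordinate $c_i=a_i+ub_i$ with $a_i,b_i\in\mathbb{Z}_4$; since $u^2=0$, we get $uc_i=ua_i$, which vanishes exactly when $a_i=0$. Hence $uc=0$ iff every $a_i=0$ iff $c\in uR^n$ iff $\phi(c)=0$, contradicting $c\notin\ker\phi$. The same identity $uc_i=ua_i$ also gives $\operatorname{supp}(uc)\subseteq\operatorname{supp}(c)$, because $a_i\neq 0$ forces $c_i=a_i+ub_i\neq 0$ in $R$. Consequently $w_H(uc)\le w_H(c)$, and since $uc$ is a nonzero element of $\ker\phi$ we obtain $w_H(\ker\phi)\le w_H(uc)\le w_H(c)$. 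Taking the minimum over nonzero $c\in C$ yields $w_H(C)\ge w_H(\ker\phi)$, completing the proof.

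I do not anticipate a serious obstacle here; the whole argument rests on the single observation that multiplication by $u$ is a weight-non-increasing operator on $R^n$ that maps $C$ into $\ker\phi$ and whose kernel on $C$ is exactly $\ker\phi$ itself. The only point that requires a small check is that $uc\neq 0$ whenever $c\notin\ker\phi$, and this follows directly from the $R$-module structure $R=\mathbb{Z}_4+u\mathbb{Z}_4$ with $u^2=0$.
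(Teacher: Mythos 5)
Your proof is correct and follows essentially the same route as the paper: both arguments rest on the observation that multiplication by $u$ is a weight-non-increasing map sending $C$ into $\ker\phi=\left\langle ua(x)\right\rangle$, combined with the trivial inclusion $\left\langle ua(x)\right\rangle\subseteq C$ for the reverse inequality. If anything, your version is slightly more careful than the paper's, since you explicitly verify that $uc\neq 0$ when $c\notin\ker\phi$, a case the paper's proof passes over silently.
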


\begin{proof} Let $c(x)=c_0(x)+uc_1(x) \in C$. Then $uc(x)=uc_0(x)$. It is clear that $w_H(uc(x))=w_H(uc_0(x)) \leq w_H(c(x))$. So $w_H(uC) \leq w_H(C)$. Also, since $uC$ is a subcode of $C$, $w_H(C) \leq w_H(uC)$. Hence the result.
 \end{proof}

\section{One generator cyclic codes over $R$}

We now consider cyclic codes over $R$ which are principal ideals in $\frac{R[x]}{\left\langle x^n-1\right\rangle}$. For a finite chain ring $\mathcal{R}$, the ring $\frac{\mathcal{R}[x]}{\left\langle x^n-1\right\rangle}$ is a principal ideal ring and the form of the generator of an ideal of $\frac{\mathcal{R}[x]}{\left\langle x^n-1\right\rangle}$ is well known \cite{norton}. Using the form of this generator, a necessary and sufficient condition for cyclic codes over $\mathbb{Z}_q$ to be free is provided in \cite[Proposition 1]{bhaintwal}. However, in the present case, $R$ is not a chain ring and the form of the generator of a principally generated ideal of $\frac{R[x]}{\left\langle x^n-1\right\rangle}$ is not known. Below we generalize \cite[Proposition 1]{bhaintwal} for the present case and provide a necessary condition (Theorem (\ref{thm2.4})) and a sufficient condition (Theorem (\ref{thm2.5})) for the cyclic codes over $R$ to be free.

\begin{theorem}\label{thm2.4}
Let $C$ be a principally generated cyclic code of length $n$ over $R$ generated by $g(x) \in R[x]$. If $g(x) \mid x^n-1$, then $C$ is $R$-free.
\end{theorem}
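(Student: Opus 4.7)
My plan is to exhibit an explicit $R$-basis for $C$, namely $\{g(x),xg(x),\ldots,x^{n-k-1}g(x)\}$ where $k$ is the degree of $g(x)$, after first reducing to the case that $g(x)$ is monic. Since $x^n-1$ is monic and hence regular in $R[x]$, and $g(x)$ divides it, $g(x)$ cannot be a zero divisor in $R[x]$: any nonzero annihilator of $g(x)$ would also annihilate $x^n-1$. By the result cited earlier from \cite[Theorem XIII.6]{mcdonald}, every regular polynomial in $R[x]$ is a unit multiple of a monic polynomial, and a unit multiple of $g(x)$ generates the same ideal. I may therefore assume without loss of generality that $g(x)$ is monic of degree $k$, which forces the cofactor $h(x)$ in the factorization $x^n-1=g(x)h(x)$ to be monic of degree $n-k$.

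With this reduction in hand, spanning is routine. Any element of $C$ is represented by some $f(x)g(x)$ with $f(x)\in R[x]$. Applying the division algorithm with the monic divisor $h(x)$ gives $f(x)=h(x)q(x)+r(x)$ with deg $r(x) < n-k$; then $f(x)g(x)\equiv r(x)g(x)\pmod{x^n-1}$ is visibly an $R$-linear combination of the proposed basis elements. For linear independence, if $\sum_{i=0}^{n-k-1}a_ix^ig(x)\equiv 0\pmod{x^n-1}$ for some $a_i\in R$, lifting to $R[x]$ and using $x^n-1=g(x)h(x)$ yields $\bigl(\sum_i a_ix^i\bigr)g(x)=g(x)h(x)q(x)$ for some $q(x)\in R[x]$. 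Since $g(x)$ is monic and hence regular, I may cancel it to obtain $\sum_i a_ix^i=h(x)q(x)$.

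The step I anticipate as the main obstacle is arguing that this last equation forces $q(x)=0$, because $R$ contains zero divisors and the degree function behaves badly under multiplication in general. Here I rely again on the monicness of $h(x)$: if $q(x)$ were nonzero with leading term $q_m x^m$, $q_m\neq 0$, then the coefficient of $x^{m+(n-k)}$ in $h(x)q(x)$ is exactly $q_m$, so $h(x)q(x)$ has degree at least $n-k$. But the left-hand side has degree at most $n-k-1$, forcing $q(x)=0$ and hence $a_i=0$ for all $i$. This yields $C$ free of rank $n-k$ over $R$. The entire argument is really a careful exploitation of the fact that both $g(x)$ and $h(x)$ can be taken monic, which simultaneously makes the division algorithm applicable and rules out hidden degree drops arising from zero divisors.
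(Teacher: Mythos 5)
Your proposal is correct and follows essentially the same route as the paper's proof: reduce to a monic generator via the unit-times-monic factorization of regular polynomials from \cite[Theorem XIII.6]{mcdonald}, take the shifts $g(x), xg(x), \ldots$ as the candidate basis, and establish independence by cancelling the regular polynomial $g(x)$ and using that the monic cofactor $h(x)$ cannot divide a nonzero polynomial of degree less than $\deg h(x)$. You are somewhat more explicit than the paper about the spanning step (which the paper dismisses as obvious) and about why the cancellation is legitimate, but the argument is the same.
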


\begin{proof}
Suppose that $g(x) \mid x^n - 1$ and $x^n-1 = g(x)h(x)$. Since $x^n-1$ is a regular polynomial, $g(x)$ and $h(x)$ must also be regular polynomials. By \cite[Theorem XIII.6] {mcdonald}, there exist monic polynomials $g^\prime(x), h^\prime(x)$ such that $g(x) = v_1(x)g^\prime(x)$ and $h(x) = v_2(x)h^\prime(x)$ and $\overline{g}(x) = \overline{g^\prime}(x)$ and $\overline{h}(x) = \overline{h^\prime}(x)$,  where $v_1(x), v_2(x)\in R[x]$ are units. Therefore, $x^n-1 = g(x)h(x) = v_1(x)v_2(x)g^\prime(x)h^\prime(x)$. Since $x^n-1, g^\prime(x)$ and $h^\prime(x)$ are all monic, we must have $v_1(x)v_2(x) = 1$ and $x^n-1 = g^\prime(x)h^\prime(x)$. Let deg $g^\prime(x) = n-k$. Then deg $h^\prime(x) = k$.  We have $C = \left\langle g(x) \right\rangle = \left\langle v_1(x)g^\prime(x) \right\rangle = \left\langle g^\prime(x) \right\rangle$, as $v_1(x)$ is a unit. Obviously the set $S=\{g^\prime(x), xg^\prime(x), \ldots, x^{k-1}g^\prime(x)\}$ spans $C$.

Now suppose $a(x)g^\prime(x) = 0~(\mbox{mod} x^n -1)$ for some $a(x) \in R[x]$ with deg $a(x) < k$. Then $x^n-1 \mid a(x)g^\prime(x)$, which implies that $\frac{x^n-1}{g^\prime(x)} \mid a(x)$, i. e., $h^\prime(x) \mid a(x)$. Since $h^\prime(x)$ is monic polynomial of degree $k$, it cannot divide a non-zero polynomial of degree less than $k$. It follows that $a(x)=0$. So the set $S$ is linearly independent and thus forms a basis for $C$. Hence $C$ is an $R$-free code.
 \end{proof}

We have following converse of Theorem (\ref{thm2.4}).

\begin{theorem} \label{thm2.5}
Let $C$ be a principally generated cyclic code of length $n$ over $R$ generated by $g(x) \in R[x]$. If $C$ is $R$-free, then there exists a monic generator $g^\prime(x)$ of $C$ such that $g^\prime(x) \mid x^n-1$.
\end{theorem}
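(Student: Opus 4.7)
The plan is to exploit the Chinese Remainder Theorem decomposition of $R_n$ that the paper has already set up: factoring $x^n-1 = f_1 f_2 \cdots f_m$ into pairwise coprime basic irreducible monic polynomials over $R$ (which exists because $n$ is odd), one has the ring (and $R$-module) isomorphism $R_n \cong \bigoplus_{i=1}^{m} R_i$, where $R_i := R[x]/\langle f_i\rangle$, and every ideal of $R_n$ --- in particular $C$ --- decomposes as $C \cong \bigoplus_{i=1}^{m} I_i$ with each $I_i$ one of the seven ideals of $R_i$ classified in Theorem \ref{local}.

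Next, I would argue that the $R$-freeness of $C$ forces each component $I_i$ to be either $\{0\}$ or the whole ring $R_i$. Since $I_i$ appears as a direct $R$-summand of the free module $C$, it is $R$-projective; and because $R$ is a local ring, every finitely generated projective $R$-module is free, so each $I_i$ is itself $R$-free. On the other hand, a non-zero free $R$-module has trivial $R$-annihilator, whereas any proper non-zero ideal of $R_i$ is contained in the unique maximal ideal generated by $\langle 2, u\rangle$, and $2u$ annihilates this maximal ideal (since $2u \cdot 2 = 0$ and $2u \cdot u = 0$). Consequently the only $R$-free ideals of $R_i$ are $\{0\}$ and $R_i$ itself, ruling out the other five possibilities at once.

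Setting $S := \{i : I_i = R_i\}$ and $g'(x) := \prod_{i \notin S} f_i(x)$ (with the empty product interpreted as $1$), the polynomial $g'(x)$ is monic as a product of monic polynomials and divides $x^n-1$ in $R[x]$ by construction. Under the CRT isomorphism, $\langle g'(x)\rangle$ decomposes as $\bigoplus_{i \in S} R_i$: for $i \notin S$ one has $g'(x) \equiv 0 \pmod{f_i}$, while for $i \in S$ the image $g'(x) \bmod f_i$ is a unit in $R_i$ by the pairwise coprimality of the $f_j$. This matches the decomposition of $C$, so $g'(x)$ is a monic generator of $C$ with $g'(x) \mid x^n-1$.

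The main obstacle is the middle step, classifying which ideals of $R_i$ can possibly be $R$-free. The crucial observation is that the single element $2u \in R$ kills every element of the maximal ideal of $R_i$, which eliminates all five proper non-zero ideals simultaneously; once this is in hand, the rest of the argument is just CRT bookkeeping built on machinery already developed earlier in the paper.
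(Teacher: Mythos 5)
Your argument is correct, but it takes a genuinely different route from the paper's. The paper works directly on the generator: it replaces $g(x)$ by a monic associate $g^\prime(x)$ (using regularity of $g$ and McDonald's Theorem XIII.6), computes the free rank $k$ of $C$ by reducing an $R$-basis modulo $\left\langle 2,u\right\rangle$ to obtain a basis of $\overline{C}$ over $\mathbb{F}_2$, and then writes $x^k g^\prime(x)$ as a combination of $g^\prime(x), \ldots, x^{k-1}g^\prime(x)$ to produce a monic relation $a(x)g^\prime(x)=x^n-1$. You instead lean on the CRT decomposition $C \cong \bigoplus_i I_i$ and the classification of the seven ideals of $R[x]/\left\langle f_i\right\rangle$ from Theorem \ref{local}, eliminating the five proper non-zero possibilities in one stroke via the annihilating element $2u$, combined with the facts that a direct summand of a free module is projective and that finitely generated projective modules over the local ring $R$ are free; these two imported facts are standard but nowhere stated in the paper. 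Your route buys a strictly stronger conclusion: you never use the hypothesis that $C$ is principally generated, so you actually show that \emph{every} $R$-free cyclic code of odd length is automatically principal with a monic generator dividing $x^n-1$ (and, as a byproduct, that the free cyclic codes are exactly the $2^m$ codes with all CRT components equal to $0$ or the full local factor). The paper's route buys self-containedness --- it needs only the division algorithm and reduction modulo the maximal ideal --- and produces the complementary factor $h^\prime(x)=a(x)$ explicitly. Both proofs are sound.
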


\begin{proof}
Suppose that $C$ is an $R$-free code. Since $g(x)$ generates an $R$-free code, $g(x)$ must be a regular polynomial. Therefore there exist a monic polynomial $g^\prime(x) \in R[x]$ such that $g(x) = v(x)g^\prime(x)$ and $\overline{g}(x) = \overline{g^\prime}(x)$,  where $v(x)$ is a unit in $R[x]$. Let the $R$-rank of $C$ be $s$ and $S=\{c_1, c_2, \ldots, c_s\}$ an $R$-basis of $C$. Then the set $\{\overline{c}_1, \overline{c}_2, \ldots, \overline{c}_s\}$ forms a basis for the cyclic code $\overline{C}$ over the finite field $\overline{R}$. Since $C = \left\langle g(x) \right\rangle$, so $\overline{C} = \left\langle \overline{g}(x) \right\rangle = \overline{g^\prime}(x)$. Since $\overline{g^\prime}(x)$ is monic, therefore it is the generator polynomial of $\overline{C}$.  Let deg $\overline{g^\prime}(x) = n-k$. Then the set $\{\overline{g^\prime}(x), x\overline{g^\prime}(x), \ldots, x^{k-1}\overline{g^\prime}(x)\}$ forms a basis for $\overline{C}$. So we must have $s=k$.

Now $C = \left\langle g(x) \right\rangle = \left\langle g^\prime(x) \right\rangle$. Clearly, the elements $g^\prime(x), xg^\prime(x), x^2g^\prime(x) \ldots $ span $C$. Also, the elements $\{g^\prime(x), xg^\prime(x), \ldots, x^{k-1}g^\prime(x)\}$ are linearly independent over $R$; for if they are not, then they give a dependence relation among the elements $\overline{g^\prime}(x), x\overline{g^\prime}(x), \ldots, x^{k-1}\overline{g^\prime}(x)$, a contradiction. Now since $x^kg^\prime(x)$ is a codeword, we can write $x^kg^\prime(x)$ as a linear combination of the elements $x^ig^\prime(x), i=0, 1, \ldots, k-1$. Let
\[ x^kg^\prime(x) = \sum_{i=0}^{k-1} a_ix^ig^\prime(x)~,\]
which can be written as $\sum_{i=0}^{k} a_ix^ig^\prime(x) =0$ with $a_k=1$, or $a(x)g^\prime(x)=0$. Then $x^n-1 \mid a(x)g^\prime(x)$ and since $a(x)g^\prime(x)$ is a monic polynomial of degree $n$,  we must have $x^n-1=a(x)g^\prime(x)$. Therefore, $g^\prime(x) \mid x^n-1$.
 \end{proof}

The following result follows from Theorem (\ref{thm2.4}) and Theorem (\ref{thm2.5}).

\begin{proposition} \label{R free to z4 free} Let $C$ be a principally generated cyclic code of length over $R$. Then $C$ is free if and only if there exists a monic generator $g(x)$ in $C$ such that $g(x) ~| ~x^n-1$. Furthermore, $C$ has free rank $n- deg~g(x)$ and the elements $g(x)$, $xg(x)$, $\cdots$, $x^{n-deg~g(x)-1}g(x)$ forms a basis for $C$.
\end{proposition}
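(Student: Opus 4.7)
The plan is to derive this proposition by consolidating Theorems \ref{thm2.4} and \ref{thm2.5} and then making the rank and basis explicit. For the biconditional, the ``if'' direction (existence of a monic $g(x)$ with $g(x)\mid x^n-1$ implies $C$ is $R$-free) is exactly Theorem \ref{thm2.4}. The ``only if'' direction (if $C$ is $R$-free, then there is a monic generator of $C$ dividing $x^n-1$) is exactly Theorem \ref{thm2.5}. So the equivalence requires no further work.

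For the rank and basis, I would pick a monic generator $g(x)$ of $C$ with $x^n-1 = g(x)h(x)$. Comparing degrees and leading coefficients on both sides, $h(x)$ is also monic, of degree $k := n-\deg g(x)$. I claim $B = \{g(x), xg(x), \ldots, x^{k-1}g(x)\}$ is an $R$-basis of $C$. For spanning, any element of $C$ has the form $p(x)g(x)$ in $R_n$ for some $p(x)\in R[x]$; applying the ordinary division algorithm by the monic polynomial $h(x)$, write $p(x) = q(x)h(x) + r(x)$ with $\deg r(x) < k$, so
\[
p(x)g(x) \;=\; q(x)\bigl(g(x)h(x)\bigr) + r(x)g(x) \;\equiv\; r(x)g(x) \pmod{x^n-1},
\]
which lies in the $R$-span of $B$. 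For linear independence, suppose $a(x)g(x)\equiv 0 \pmod{x^n-1}$, where $a(x)=\sum_{i=0}^{k-1}a_i x^i$, $a_i\in R$. Then $x^n-1 = g(x)h(x)$ divides $a(x)g(x)$ in $R[x]$, and since $g(x)$ is monic it is a non-zero-divisor, so we may cancel it to get $h(x)\mid a(x)$. Because $h(x)$ is monic of degree $k$ and $\deg a(x) < k$, this forces $a(x)=0$; hence every $a_i=0$, and $B$ is linearly independent. This shows $C$ is free of rank $k = n-\deg g(x)$ with basis $B$.

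The only delicate point — and not really an obstacle — is the cancellation of the monic $g(x)$ in the independence step, which uses that monic polynomials are regular in $R[x]$ (the leading coefficient of any product $g(x)s(x)$ equals the leading coefficient of $s(x)$, so $g(x)s(x)=0$ forces $s(x)=0$). Everything else is a direct application of the two theorems and a standard Euclidean-division argument, so the proof should be short.
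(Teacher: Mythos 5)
Your proof is correct and takes essentially the same route as the paper: the paper also obtains the equivalence by simply combining Theorems \ref{thm2.4} and \ref{thm2.5}, and the rank and basis claim is exactly what the proof of Theorem \ref{thm2.4} establishes (spanning by the monic generator's shifts, independence via $h(x)\mid a(x)$ after cancelling the regular monic $g(x)$). Your version merely spells out the spanning step (division by the monic cofactor $h(x)$) and the cancellation more explicitly than the paper does.
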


\begin{example}
Consider the cyclic code $C$ of length $7$ over $R$ generated by the polynomial $g(x) = x^3+2x^2+x-1$. $g(x)$ is the Hensel lift of $x^3+x+1 \in \mathbb{F}_2[x]$ to $R$. The cyclic code  $C = \left\langle g(x) \right\rangle$ an $R$-free cyclic code of length $7$ and the free rank $4$.
\end{example}

\begin{theorem} If $C=C_1+uC_2$ is free cyclic code over $R$ then so is $C_1$ over $\mathbb{Z}_4$.
\end{theorem}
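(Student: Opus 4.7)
The plan is to convert an $R$-basis of $C$ into a $\mathbb{Z}_4$-basis of $C_1$ by reducing modulo $u$. Pick an $R$-basis $\{v_1,\ldots,v_k\}$ of $C$, and let $\pi : R^n \to \mathbb{Z}_4^n$ denote the coordinatewise reduction $a+ub \mapsto a$; this is a surjective $\mathbb{Z}_4$-algebra homomorphism with kernel $uR^n$. From the decomposition $C=C_1+uC_2$, every element of $C$ has the form $c_1+uc_2$ with $c_1\in C_1$ and $c_2\in C_2$, so $\pi(C)=C_1$. Cyclicity of $C_1$ over $\mathbb{Z}_4$ is already guaranteed by Theorem \ref{thm2.2}, so all that remains is to show $C_1$ is $\mathbb{Z}_4$-free.

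My candidate basis is $B=\{\pi(v_1),\ldots,\pi(v_k)\}$. Spanning of $C_1$ is routine, since any $c\in C$ equals $\sum r_iv_i$ for some $r_i\in R$, and then $\pi(c)=\sum \pi(r_i)\,\pi(v_i)$ is a $\mathbb{Z}_4$-combination of the $\pi(v_i)$.

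For linear independence, suppose $\sum_{i=1}^k s_i\pi(v_i)=0$ in $\mathbb{Z}_4^n$ with $s_i\in\mathbb{Z}_4$. Then $\sum s_i v_i\in\ker\pi = uR^n$, so $\sum s_iv_i = uw$ for some $w\in R^n$. Multiplying both sides by $u$ annihilates the right side because $u^2=0$, giving $\sum (us_i)v_i=0$ in $R^n$. The $R$-independence of $\{v_i\}$ then forces $us_i=0$ in $R$ for every $i$. The mild but essential step is to pass from $us_i=0$ to $s_i=0$: inspecting the list of elements of $R$ given in Section~2, the products $u, 2u, 3u$ are all nonzero, so $us=0$ with $s\in\mathbb{Z}_4$ forces $s=0$. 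Hence every $s_i=0$, and $B$ is a $\mathbb{Z}_4$-basis of $C_1$, which is therefore $\mathbb{Z}_4$-free.
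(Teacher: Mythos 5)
Your proof is correct, but it takes a genuinely different route from the paper's. The paper argues at the level of generator polynomials: it invokes Proposition~\ref{R free to z4 free} to write $x^n-1=g(x)h(x)$ with $g(x)$ a monic generator of $C$, reduces this factorization modulo $u$ to obtain $x^n-1=g^{\prime}(x)h^{\prime}(x)$ over $\mathbb{Z}_4$, and concludes that $C_1=\left\langle g^{\prime}(x)\right\rangle$ is free because its monic generator divides $x^n-1$. That argument is tied to the one-generator setting (the divisibility criterion for freeness is only established for principally generated codes), and in exchange it produces the generator polynomial of $C_1$ explicitly. Your argument is purely module-theoretic: you push an $R$-basis of $C$ through the reduction $\pi$ modulo $u$ and verify that the images stay independent over $\mathbb{Z}_4$, the key step being that multiplying the relation $\sum s_iv_i=uw$ by $u$ kills the right-hand side while $us=0$ forces $s=0$ for $s\in\mathbb{Z}_4$. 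This uses no cyclicity and no single-generator hypothesis, so it proves the statement as literally written for an arbitrary free linear code over $R$, and it yields the extra information that the $\mathbb{Z}_4$-rank of $C_1$ equals the $R$-rank of $C$; what it does not give is an explicit generator for $C_1$. The one point that deserves care is the identification $\pi(C)=C_1$, i.e.\ that $C_1$ is recovered as $\mathrm{Res}(C)$ from the decomposition $C=C_1+uC_2$; you do address this, and it is consistent with the paper's conventions.
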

\begin{proof} From Proposition (\ref{R free to z4 free}), if $C$ is a free cyclic code over $R$ with generator polynomial $g(x)$ then $x^n-1=g(x) h(x)$. Since $R=\mathbb{Z}_4+u\mathbb{Z}_4$, we can express $g(x)=g^{'}(x)+ug^{''}(x)$ and $h(x)=h^{'}(x)+uh^{''}(x)$, where $g^{'}(x), g^{''}(x), h^{'}, h^{''}(x) \in \mathbb{Z}_4[x]$. Then $x^n-1=g^{'}(x) h^{'}(x)~(\mbox{mod}~ u)$. The result follows.
 \end{proof}

\begin{example}\label{ex1}
 Consider again the cyclic code $C$ of length $7$ generated by $g(x)=x^3+2x^2+x-1$. Then $C$ is free over $R$ since $x^3+2x^2+x-1$ is divisors of $x^7-1$ over $R$.  As $x^3+2x^2+x-1$ is divisors of $x^7-1$ over $\mathbb{Z}_4$ as well, $C_1$ is a free cyclic code of length $7$ over $\mathbb{Z}_4$.
\end{example}

A polynomial $e(x)$ in $R[x]$ is said to be an \emph{idempotent} if $e(x)^2=e(x)~(\mbox{mod}~ x^n-1)$. The following theorems are the generalization of \cite[Theorem 5, 6]{pless}.

\begin{theorem} Let $C$ be a cyclic code of length $n$ over $R$.
\begin{enumerate}
\item If $C= \left\langle g \right\rangle$ and $g | x^n-1$, then $C$ has an idempotent generator in $R$.
\item If $C= \left\langle ug \right\rangle$ with $g | x^n-1$, then $C=\left\langle ue \right\rangle$, where $e$ is an idempotent generator of $C$. 
\end{enumerate}
\end{theorem}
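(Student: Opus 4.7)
The plan is to derive both parts from a B\'ezout identity in $R_n$. For Part (1), I will first write $x^n-1 = g(x) h(x)$ in $R[x]$. Since $x^n-1$ is regular, both $g$ and $h$ are regular, and \cite[Theorem XIII.6]{mcdonald} allows me to replace each of $g$ and $h$ by a monic associate without changing the ideal it generates in $R_n$; so I may assume $g$ and $h$ are both monic. Because $n$ is odd, $x^n-1$ factors uniquely into pairwise coprime basic irreducibles over $R$ (as noted just after Theorem \ref{primitive}), and hence $g$ and $h$ are products of disjoint subsets of those factors and are themselves coprime in $R[x]$. B\'ezout's identity then supplies $a(x), b(x) \in R[x]$ with $a(x) g(x) + b(x) h(x) = 1$.

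Next I will set $e(x) := a(x) g(x)$, viewed in $R_n$, and verify it is the required idempotent generator. Since $gh = x^n-1 \equiv 0$ in $R_n$, multiplying $1 = ag + bh$ by $ag$ gives
\[
e = (ag)^2 + ab \cdot gh \equiv e^2 \pmod{x^n-1},
\]
so $e$ is idempotent. The inclusion $e \in \langle g \rangle$ is clear from $e = ag$, and conversely $g = g(ag+bh) = eg + b(x^n-1) \equiv eg \pmod{x^n-1}$ gives $g \in \langle e \rangle$; hence $\langle e \rangle = \langle g \rangle$. For Part (2), I will apply Part (1) to the code $\langle g \rangle$ to obtain an idempotent $e \in R_n$ with $\langle g \rangle = \langle e \rangle$, then write $g = \alpha e$ and $e = \beta g$ in $R_n$ for suitable $\alpha, \beta$ and multiply each relation by $u$ to obtain $ug = \alpha(ue)$ and $ue = \beta(ug)$; thus $\langle ug \rangle = \langle ue \rangle$ with $e$ idempotent.

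The principal obstacle is establishing coprimality of $g$ and the cofactor $h = (x^n-1)/g$ in $R[x]$. This is precisely where both hypotheses do real work: the odd-$n$ condition supplies the uniqueness of the factorization of $x^n-1$ into pairwise coprime basic irreducibles, and \cite[Theorem XIII.6]{mcdonald} lets me pass from an arbitrary regular divisor $g$ to a monic representative so that this uniqueness applies and yields a disjoint-factor decomposition. Once coprimality is secured, the B\'ezout-idempotent construction and the passage to the $u$-multiple in Part (2) are routine, and no further use of the non-chain structure of $R$ is needed.
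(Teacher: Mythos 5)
Your proposal is correct and follows essentially the same route as the paper: factor $x^n-1 = gh$, use the pairwise-coprime factorization of $x^n-1$ (for odd $n$) to get $\gcd$-coprimality of $g$ and $h$, apply B\'ezout to set $e = ag$, and verify $e^2 \equiv e$ and $\langle e\rangle = \langle g\rangle$, with Part (2) obtained by multiplying through by $u$. Your treatment of the coprimality step and of Part (2) is in fact somewhat more explicit than the paper's, which asserts coprimality in one line and dismisses Part (2) as ``similar.''
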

\begin{proof} Let $x^n-1=gh$ for some $h$ in $R[x]$. Since $x^n-1$ has distinct factors over $R$, therefore $g,h$ are coprime in $R[x]$. Then there exist $\lambda_1, \lambda_2$ in $R[x]$ such that $g \lambda_1 + h \lambda_2=1$.

Let $e=g \lambda_1$. Then $e \in \left\langle g \right\rangle$. Since $g \lambda_1 + h \lambda_2=1$, $e=1- h \lambda_2$ and $e^2=e (1-h \lambda_2)=e ~(\mbox{mod}~ x^n-1)$. Now $ge=g(1-h \lambda_2)=g~(\mbox{mod}~ x^n-1)=g$. This implies that $g \in \left\langle e \right\rangle$. Hence $\left\langle e \right\rangle = \left\langle g \right\rangle$.

 The second result can be proved similarly.
 \end{proof}

\begin{theorem}
If $C$ be a free cyclic code of length $n$ over $R$ with idempotent generator $e(x)$ in $R[x]$ then $C^{\perp}$ has the idempotent $1-e(x^{-1})$.
\end{theorem}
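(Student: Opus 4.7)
The plan is to verify in sequence that $1 - e(x^{-1})$ is (i) an idempotent modulo $x^n-1$, (ii) an element of $C^\perp$, and (iii) in fact a generator of $C^\perp$. Throughout, $x^{-1}$ is interpreted inside $R_n = R[x]/\langle x^n-1\rangle$; this makes sense because $x \cdot x^{n-1} \equiv 1 \pmod{x^n-1}$, and the substitution $x \mapsto x^{-1}$ is then a ring automorphism of $R_n$. For (i), I apply this automorphism to the identity $e(x)^2 \equiv e(x)$ to obtain $e(x^{-1})^2 \equiv e(x^{-1})$, and a direct expansion yields $(1 - e(x^{-1}))^2 \equiv 1 - e(x^{-1})$.

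The crucial ingredient for (ii) and (iii) is the standard polynomial characterization of orthogonality for cyclic codes: a polynomial $d(x)$ lies in $C^\perp$ if and only if $c(x)\,d(x^{-1}) \equiv 0 \pmod{x^n-1}$ for every $c(x) \in C$. I would derive this by observing that the coefficient of $x^k$ in $c(x)\,d(x^{-1})$ (reduced modulo $x^n-1$) equals the Euclidean inner product of $c$ with the $k$-th cyclic shift of $d$. Since both $C$ and $C^\perp$ are cyclic, the vanishing of all shifted inner products is equivalent to this single polynomial identity. Applying it to $d(x) = 1 - e(x^{-1})$ reduces (ii) to showing $c(x)(1 - e(x)) \equiv 0$ for every $c \in C = \langle e \rangle$; writing $c = a\,e$, the product becomes $a(e - e^2) \equiv 0$, as desired.

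For (iii), I first identify the annihilator of $\langle e(x)\rangle$ in $R_n$ as $\langle 1 - e(x)\rangle$: the inclusion $\langle 1-e\rangle \subseteq \mathrm{Ann}(\langle e\rangle)$ is immediate from $(1-e)e = 0$, while conversely $f e \equiv 0$ gives $f = f(e + (1-e)) = f(1-e) \in \langle 1-e\rangle$. Now given $d(x) \in C^\perp$, the characterization from the previous paragraph forces $d(x^{-1})$ to annihilate $\langle e(x)\rangle$, so $d(x^{-1}) = a(x)(1-e(x))$ for some $a(x) \in R_n$; applying the automorphism $x \mapsto x^{-1}$ again yields $d(x) = a(x^{-1})(1 - e(x^{-1})) \in \langle 1 - e(x^{-1})\rangle$. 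Combined with (ii) and the fact that $C^\perp$ is an ideal, this gives $C^\perp = \langle 1 - e(x^{-1})\rangle$, and by (i) its generator is an idempotent.

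The main obstacle is justifying the polynomial-versus-inner-product identity underlying step (ii). Over a field this is textbook, but here one must verify it as a pure coefficient-matching calculation in the free $R$-module $R_n$, making no use of invertibility; this goes through without difficulty because $n$ odd is already in force. The remaining steps are then routine applications of idempotency together with the annihilator computation.
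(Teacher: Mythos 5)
Your proof is correct and follows exactly the route the paper intends: the paper's own ``proof'' is just the remark ``Similar to the finite fields case,'' and your argument is precisely that classical argument (the identity relating coefficients of $c(x)d(x^{-1})$ to inner products of cyclic shifts, then the annihilator computation $\mathrm{Ann}(\langle e\rangle)=\langle 1-e\rangle$), carried out with the correct observation that none of it requires invertibility and so transfers verbatim to $R_n$. Nothing further is needed.
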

\begin{proof} Similar to the finite fields case.  \end{proof}

\subsection{One generator cyclic codes as $n^{th}$ roots of unity}

Since $(n,4)=1$, so $x^n-1$ factorizes uniquely into coprime monic basic irreducible polynomials. From Theorem (\ref{primitive}), there exists a primitive $n^{th}$ root of unity in $GR(R,r)$. Let $\xi^{i_1}, \xi^{i_2}, \cdots, \xi^{i_k}$ be $n^{th}$ roots of unity in $GR(R,r)$. Define the minimal polynomial $M_i(x)$ of $\xi^{i}$ as the monic polynomial of least degree having a root $\xi^i$ over $R$.   Then a cyclic code $C$ of length $n$ over $R$ can also be described in terms of $n^{th}$ roots of unity.  Then the cyclic code $C$ can be defined as \[  C= \{ c(x) \in R_n ~~: ~~c(\xi^{i_j})=0,~1 \leq j \leq k \}. \]
The generator polynomial $g(x)$ of $C$ is the least common multiple of minimal polynomials of $\xi^{i_j}$, $1 \leq j \leq k$. Then $g(x)~ |~ (x^n-1)$. Hence $C$ is a free code over $R$.

The following is a straightforward generalization of \cite[Proposition 2]{bhaintwal}.

\begin{proposition} \label{nthroot} \cite{bhaintwal} Suppose that the generator polynomial $g(x)$ of a cyclic code $C$ of length $n$ over $R$ divides $(x^n-1)$ and has as roots $\xi^b, \xi^{b+1}, \cdots \xi^{b+\delta-1}$, where $\xi$ is a primitive $n^{th}$ root of unity in a Galois extension of $R$. Then $d(C) \geq \delta$.
\end{proposition}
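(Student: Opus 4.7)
The plan is to adapt the standard BCH-bound proof to the ring setting, with the only extra technical point being the invertibility of a Vandermonde-type determinant over the Galois extension $GR(R,r)$.

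First I would argue by contradiction: suppose there exists a nonzero codeword $c(x) = \sum_{j=0}^{n-1} c_j x^j \in C$ of Hamming weight $s$ with $s < \delta$, and let $\{i_1, i_2, \ldots, i_s\}$ be its support. Since $g(x)$ divides $x^n-1$ and divides $c(x)$ in $R_n$, and since $\xi^{b}, \xi^{b+1}, \ldots, \xi^{b+\delta-1}$ are roots of $g(x)$, we obtain $c(\xi^{b+\ell}) = 0$ for $\ell = 0, 1, \ldots, s-1$. Written out, this gives the homogeneous linear system
\begin{equation*}
\sum_{k=1}^{s} c_{i_k}\, \xi^{(b+\ell)i_k} = 0, \qquad \ell = 0, 1, \ldots, s-1,
\end{equation*}
over $GR(R,r)$. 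Factoring out $\xi^{b i_k}$ from the $k$-th column and then the common $\xi^{i_k \ell}$ structure from each row, the coefficient matrix becomes $D \cdot V$, where $D = \operatorname{diag}(\xi^{b i_1}, \ldots, \xi^{b i_s})$ and $V = (\xi^{i_k \ell})_{\ell, k}$ is a Vandermonde matrix in the distinct entries $\xi^{i_1}, \ldots, \xi^{i_s}$.

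The next step is to show that the coefficient matrix is invertible over $GR(R,r)$. The determinant equals $\bigl(\prod_k \xi^{b i_k}\bigr) \prod_{j<k} (\xi^{i_k} - \xi^{i_j})$. Since $\xi$ is a unit, each $\xi^{b i_k}$ is a unit. For the Vandermonde factor, I would pass to the residue field: under the projection $GR(R,r) \twoheadrightarrow \mathbb{F}_{2^r}$, the element $\xi$ maps to some $\bar\xi \in \mathbb{F}_{2^r}$. Because $n$ is odd, hence coprime to the characteristic, $\bar\xi$ remains a primitive $n$th root of unity, so the $\bar\xi^{i_k}$ are pairwise distinct in $\mathbb{F}_{2^r}$, and every difference $\bar\xi^{i_k} - \bar\xi^{i_j}$ is nonzero. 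By the standard criterion for $GR(R,r)$, an element is a unit iff its image in the residue field is nonzero, so each $\xi^{i_k} - \xi^{i_j}$ is a unit, and thus the whole determinant is a unit.

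Consequently, the system forces $c_{i_k} = 0$ for every $k$, contradicting the assumption that $c(x)$ is nonzero with support $\{i_1, \ldots, i_s\}$. Therefore no nonzero codeword has weight less than $\delta$, and $d(C) \geq \delta$. The only step requiring care beyond the classical field argument is precisely the unit-check for the Vandermonde determinant, and that reduces cleanly to the fact that $\gcd(n,2)=1$ so that a primitive $n$th root of unity in $GR(R,r)$ projects to a primitive $n$th root of unity in $\mathbb{F}_{2^r}$.
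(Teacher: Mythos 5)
Your proof is correct; the paper itself gives no proof of this proposition (it is quoted from the reference on quasi-cyclic codes over $\mathbb{Z}_q$), and your argument is exactly the standard BCH-bound proof that the cited result uses, transplanted to $GR(R,r)$. The one point that genuinely needs checking in this setting --- that each difference $\xi^{i_k}-\xi^{i_j}$ is a unit because $GR(R,r)$ is local and the reduction of $\xi$ remains a primitive $n$th root of unity in $\mathbb{F}_{2^r}$ since $\gcd(n,2)=1$ --- is handled correctly, so there is nothing to add.
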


\begin{example}
Let $\xi$ be a root of the basic primitive polynomial $f(x)=x^4+3x^3+2x^2+1$, which is a factor of $x^{15}-1$ over $R$. 
Let the generator polynomial of a cyclic code of length $15$ over $R$ is defined as $g(x)=lcm(M_0(x),M_1(x),M_2(x),M_3(x),M_4(x),$ $M_5(x),M_6(x))$, where $M_i(x)$ are the minimal polynomials of $\xi^i$, $i=0,1,2,3,4,5,6$, respectively. We have $M_0(x)=x-1$, $M_1(x)=M_2(x)=M_4(x)=x^4+3x^3+2x^2+1$, $M_3(x)=M_6(x)=x^4+x^3+x^2+x+1$ and $M_5(x)=x^2+x+1$. Therefore, $g(x)=x^{11}+2x^9+3x^8+3x^7+x^6+2x^4+3x^3+x^2+3x+3$. The cyclic code $C$ generated by $g(x)$ is a free code of rank $4$. Since $g(x)$ has $7$ consecutive roots, $d(C) \ge 8$, where $d(C)$ denotes the minimum Hamming distance of $C$. Also since $2g(x)=8$, we must have $d(C) = 8$.
\end{example}

\section{Conclusion}

In this paper we have studied some structural properties of cyclic codes of odd length over the ring $R= \mathbb{Z}_4+u\mathbb{Z}_4$, $u^2=0$. The general form of the generators of cyclic codes over $R$ is provided and a formula for their ranks is determined. We have mainly focused on cyclic codes over $R$ that are principally generated. We have also obtained a necessary condition and a sufficient condition for such codes to be free $R$-modules.

\medskip
Received xxxx 20xx; revised xxxx 20xx.
\medskip

\end{document}